\journal{Science of Computer Programming}
\begin{document}

\newcommand{\Nothing}{{\tt fail}}
\newcommand{\Just}[1]{#1}
\newcommand{\PE}[1]{\ensuremath{\mathcal{T}(#1)}}
\newcommand{\RE}[1]{\ensuremath{\mathcal{R}(#1)}}
\newcommand{\fivespaces}{\;\;\;\;\;}
\newcommand{\tenspaces}{\fivespaces\fivespaces}
\newcommand{\fifteenspaces}{\tenspaces\fivespaces}
\newcommand{\twentyspaces}{\tenspaces\tenspaces}
\newcommand{\thirtyspaces}{\twentyspaces\tenspaces}
\newcommand{\fortyspaces}{\twentyspaces\twentyspaces}
\newcommand{\fiftyspaces}{\fortyspaces\tenspaces}
\newcommand{\sixtyspaces}{\fortyspaces\twentyspaces}
\newcommand{\interf}{\fivespaces}
\newcommand{\mylabel}[1]{\, \mathbf{(#1)}}
\newcommand{\Con}[2]{#1\,#2}
\newcommand{\Choice}[2]{#1\,|\:#2}
\newcommand{\Chre}[2]{#1\,|\,#2}
\newcommand{\Choexe}[2]{#1\;\;|\;\;#2}
\newcommand{\Ochoexe}[2]{#1\;\;/\;\;#2}
\newcommand{\Ocho}[2]{#1\,/\:#2}
\newcommand{\Ore}[2]{#1\,/\,#2}
\newcommand{\Gp}{G^\prime}
\newcommand{\Gk}{G_k}
\newcommand{\Gpk}{G^\prime_k}
\newcommand{\Xp}{x^\prime}
\newcommand{\Xpp}{x^{\prime\prime}}
\newcommand{\Yp}{y^\prime}
\newcommand{\Ypp}{y^{\prime\prime}}
\newcommand{\Zp}{z^\prime}
\newcommand{\Wp}{w^\prime}
\newcommand{\Anyx}{X}
\newcommand{\Anyo}{o}
\newcommand{\Bvert}{\big\vert}
\newcommand{\Lp}{\stackrel{\mbox{\tiny{PEG}}}{\leadsto}}
\newcommand{\Lg}{\ensuremath{\stackrel{\mbox{\tiny{CFG}}}{\leadsto}}}
\newcommand{\Ll}{\stackrel{\mbox{\tiny{L\!L\!(1)}}}{\leadsto}}
\newcommand{\Lk}{\stackrel{\mbox{\tiny{L\!L\!(k)}}}{\leadsto}}
\newcommand{\Lr}{\stackrel{\mbox{\tiny{R\!E}}}{\leadsto}}
\newcommand{\Epsi}{\varepsilon}
\newcommand{\Rp}[2]{\Pi(#1,\,#2)}
\newcommand{\Tup}[2]{(#1,\,#2)}
\newcommand{\Peg}[2]{#1[#2]}
\newcommand{\Prod}[2]{\Peg{#1}{#2}}
\newcommand{\Pgg}[1]{\Peg{G}{#1}}
\newcommand{\Pgk}[1]{\Peg{\Gk}{#1}}
\newcommand{\Grm}[4]{(#1,\,#2,\,#3,\,#4)}
\newcommand{\Mat}[2]{#1\;\,#2\,}
\newcommand{\Reg}[2]{#1\;\,#2\,}
\newcommand{\Matg}[2]{\Mat{\Pgg{#1}}{#2}}
\newcommand{\Matk}[2]{\Mat{\Pgk{#1}}{#2}}
\newcommand{\Regk}[1]{\Reg{\Ek}{#1}}
\newcommand{\Pk}{p_k}
\newcommand{\Vk}{V_k}
\newcommand{\Ek}{e_k}
\newcommand{\Len}[1]{\Bvert#1\Bvert}
\newcommand{\Ps}{p_S}
\newcommand{\Tet}[1]{\texttt{#1}}
\newcommand{\Ch}[1]{\texttt{#1}}
\newcommand{\Chh}[1]{\mbox{`}{#1}\mbox{'}}
\newcommand{\FLW}{FOLLOW^G}
\newcommand{\FST}{FIRST^G}
\newcommand{\arrow}{\rightarrow}
\newcommand{\Ep}{e^{\prime}}
\newcommand{\Rig}{\Rightarrow_G}
\newcommand{\Mrig}[4]{\Tup{#1}{#2} \Rig \Tup{#3}{#4}}
\newcommand{\Prig}[3]{\Tup{#1}{#2} \Rig^+\, #3}
\newcommand{\Rigs}{{\stackrel{*}{\Rightarrow}}_G}
\newcommand{\Gtdpl}[3]{#1[#2,\,#3]}
\newcommand{\Lcfg}{L^{\scriptscriptstyle C\!F\!G}(G)}
\newcommand{\Lpeg}{L^{\scriptscriptstyle P\!E\!G}(G)}
\newcommand{\Lpegp}{L^{\scriptscriptstyle P\!E\!G}(\Gp)}
\newcommand{\Lford}{L^{\scriptscriptstyle F}(G)}
\newcommand{\Pdot}{\textbf{.}}
\newcommand{\Dq}{\texttt{"}}
\newcommand{\Sq}{\texttt{'}}

\newcommand{\Mend}{matchEnd}
\newcommand{\Tend}{$\Epsi$-End}
\newcommand{\Pop}{p_1^\prime}
\newcommand{\Ptp}{p_2^\prime}
\newcommand{\Ctwo}{choice_{LL(1)}.2}
\newcommand{\Vark}{var_{LL(k)}.1}
\newcommand{\Varl}{var_{LL(1)}.1}

\newcommand{\Matxas}[2]{#1 \stackrel{\preceq x}{\rightarrow} #2}
\newcommand{\Matxpas}[2]{#1 \stackrel{\prec x}{\rightarrow} #2}
\newcommand{\Sucxw}[2]{#1 \stackrel{\preceq x}{\leftarrow} #2}
\newcommand{\Sucxpw}[2]{#1 \stackrel{\prec x}{\leftarrow} #2}
\newcommand{\Fout}{f_{out}}
\newcommand{\Fin}{f_{in}}
\newcommand{\Isnull}{isNull}
\newcommand{\Nnull}{\neg\Isnull}
\newcommand{\Hase}{hasEmpty}
\newcommand{\Nemp}{\neg\Hase}
\newcommand{\True}{\textbf{true}}
\newcommand{\False}{\textbf{false}}
\newcommand{\Lor}[2]{#1 \;\vee\; #2}
\newcommand{\Land}[2]{#1 \;\wedge\; #2}
\newcommand{\E}{e_0}

\newcommand{\FLWk}{FOLLOW^G_k}
\newcommand{\FLWt}{FOLLOW^G_2}
\newcommand{\FSTk}{FIRST^G_k}
\newcommand{\FSTt}{FIRST^G_2}
\newcommand{\Uk}{\otimes_k}
\newcommand{\Pp}{P^\prime}
\newcommand{\Matp}[2]{\Mat{\Pgp{#1}}{#2}}
\newcommand{\Pgp}[1]{\Peg{\Gp}{#1}}
\newcommand{\Strpeg}{string2peg}
\newcommand{\Setcho}{set2choice}
\newcommand{\fstk}{take_k}
\newcommand{\Pik}[2]{\Pi(#1,\,#2)}
\newcommand{\FLWs}{\Con{\$}{\$}}

\newcommand{\Myeq}{\;=\;}

\newtheorem{definition}{Definition}[section]
\newtheorem{proposition}{Proposition}[section]
\newtheorem{corollary}[proposition]{Corollary}
\newtheorem{lemma}[proposition]{Lemma}

\begin{frontmatter}

\title{On the Relation between Context-Free Grammars
and Parsing Expression Grammars}

\author{Fabio Mascarenhas}
\ead{fabiom@dcc.ufrj.br}
\address{Department of Computer Science -- UFRJ -- Rio de Janeiro -- Brazil} 

\author{S\'{e}rgio Medeiros}
\ead{sergiomedeiros@ect.ufrn.br}
\address{School of Science and Technology -- UFRN -- Natal -- Brazil} 

\author{Roberto Ierusalimschy}
\ead{roberto@inf.puc-rio.br}
\address{Department of Computer Science -- PUC-Rio -- Rio de Janeiro -- Brazil} 

\begin{abstract}
Context-Free Grammars (CFGs) and Parsing Expression Grammars (PEGs) have several similarities and a few differences in both their syntax and semantics, but they are usually presented through formalisms that hinder a proper comparison. In this paper we present a new formalism for CFGs that highlights the similarities and differences between them. The new formalism borrows from PEGs the use of {\em parsing expressions} and the recognition-based semantics. We show how one way of removing non-determinism from this formalism yields a formalism with the semantics of PEGs. We also prove, based on these new formalisms, how LL(1) grammars define the same language whether interpreted as CFGs or as PEGs, and also show how strong-LL($k$), right-linear, and LL-regular grammars have simple language-preserving translations from CFGs to PEGs. Once these classes of CFGs can be automatically translated to equivalent PEGs, we can reuse classic top-down grammars in PEG-based tools. 
\end{abstract}

\begin{keyword}
Context-free grammars\sep
parsing expression grammars\sep
parsing\sep
LL(1)\sep
LL(k)\sep
LL-regular\sep
right-linear grammars\sep
natural semantics
\end{keyword}

\end{frontmatter}

\section{Introduction}

Context-Free Grammars (CFGs) are the formalism of choice for describing the syntax of programming languages. A CFG describes a language as the set of strings generated from the grammar's initial symbol by a sequence of rewriting steps. CFGs do not, however, specify a method for efficiently recognizing whether an arbitrary string belongs to its language. In other words, a CFG does not specify how to {\em parse} the language, an essential operation for working with the language (in a compiler, for example). Another problem with CFGs is ambiguity, where a string can have more than one parse tree.

Parsing Expression Grammars (PEGs)~\cite{pegford} are an alternative formalism for describing a language's syntax. Unlike CFGs, PEGs are unambiguous by construction, and their standard semantics is based on recognizing strings instead of generating them. A PEG can be considered both the specification of a language and the specification of a top-down parser for that language. 

The idea of using a formalism for specifying parsers is not new; PEGs are based on two formalisms first proposed in the early seventies, Top-Down Parsing Language (TDPL)~\cite{tdpl} and Generalized TDPL (GTDPL)~\cite{gtdpl}. PEGs have in common with TDPL and GTDPL the notion of {\em limited backtracking} top-down parsing: the parser, when faced with several alternatives, will try them in a deterministic order (left to right), discarding remaining alternatives after one of them succeeds. Compared with the older formalisms, PEGs introduce a more expressive syntax, based on the syntax of regexes, and add {\em syntactic predicates}~\cite{antlr:parr}, a form of unrestricted lookahead where the parser checks whether the rest of the input matches a parsing expression without consuming the input.

Ford~\cite{pegford} has already proven that PEGs can recognize any deterministic
context-free language, but leaves open the question of the relation between
context-free grammars and PEGs. In this paper, we argue that the similarities between CFGs and PEGs are deeper than usually thought, and how these similarities have been obscured by the way the two formalisms have been presented. PEGs, instead of a formalism completely unrelated to CFGs, can be seen as a natural outcome of removing the ambiguity of CFGs. 

We start with a new semantics for CFGs, using the framework of natural semantics~\cite{kahn,winskel}. The new semantics borrows the syntax of PEGs and is also based on recognizing strings. We make the source of the ambiguity of CFGs, their non-deterministic alternatives for each non-terminal, explicit in the semantics of our new non-deterministic choice operator. We then remove the non-determinism, and consequently the ambiguity, by adding explicit failure and ordered choice to the semantics, so now we can only use the second alternative in a choice if the first one fails. By that point, we only need to add the {\em not} syntactic predicate to arrive at an alternative semantics for PEGs (modulo syntactic sugar such as the repetition operator and the {\em and} syntactic predicate). We prove that our new semantics for both CFGs and PEGs are equivalent to the usual ones.

Our semantics for CFGs and PEGs make it clear that the defining characteristic that sets
PEGs apart from CFGs is the ordered choice. For example, the grammar $S \arrow \Con{(\Choice{aba}{a})}{b}$ is both a CFG and a PEG in our notation, but consumes the prefix
$ab$ out of the subject $abac$ when interpreted as a CFG, and fails as a PEG.

We also show in this paper how our new semantics for CFGs gives us a way to translate some subsets of CFGs to PEGs that parse the same language. The idea is that, as the sole distinction between CFG and PEG semantics is in the choice operator, we will have a PEG that is equivalent to the CFG whenever we can make the PEG choose the correct alternative at each choice, either through reordering or with the help of syntactic predicates. We show transformations from CFGs to PEGs for three unambiguous subsets of CFGs: LL(1), Strong LL($k$), and LL-regular.

A straightforward correspondence between LL(1) grammars and PEGs was already noted~\cite{compcompilers}, but never formally proven. The correspondence is that an LL(1) grammar describes the same language whether interpreted as a CFG or as a PEG. The intuition is that, if an LL(1) parser is able to choose an alternative with a single symbol of lookahead, then a PEG parser will fail for every alternative that is not the correct one. We prove that this intuition is correct if none of the alternatives in the CFG can generate the empty string, and also prove that a simple ordering of the alternatives suffices to hold the correspondence even if there are alternatives that can generate the empty string. In other words, any LL(1) grammar is already a PEG that parses the same language, modulo a reordering of the alternatives.

There is no such correspondence between strong-LL($k$) grammars and PEGs, not even by imposing a specific order among the alternatives for each non-terminal. Nevertheless, we also prove that we can transform a strong-LL($k$) grammar to a PEG, just by adding a predicate to each alternative of a non-terminal. We can either add a predicate to the beginning of each alternative, thus encoding the choice
made by a strong-LL($k$) parser in the grammar, or, more interestingly, add
a predicate to the end of each alternative.

Our transformations lead to efficient parsers for LL(1) and strong-LL($k$) grammars, even in PEG implementations that do not use memoization to guarantee O($n$) performance, because the resulting PEGs only use backtracking to test the lookahead of each alternative, so their use of backtracking is equivalent to a top-down parser checking the next $k$ symbols of lookahead against the lookahead values of each production.

There is no direct correspondence between LL-regular grammars and PEGs,
either, given that strong-LL($k$) grammars are a proper subset of LL-regular
grammars. But we also show that we can transform any LL-regular grammar
into a PEG that recognizes the same language: we first prove that right-linear grammars for languages with the prefix property, a property that is easy to achieve, have the same language whether interpreted as CFGs or as PEGs, then use this result to build lookahead expressions for the alternatives of each non-terminal based on which regular partition this alternative falls.

While LL(1) grammars are a proper subset of strong-LL($k$) grammars,
which are a proper subsets of LL-regular grammars, thus making the LL-regular
transformation work on grammars belonging to these simpler classes, the simpler
classes have more straightforward transformations which merit a separate
treatment.

Given that these classes of top-down CFGs can be automatically translated into
equivalent PEGs, we can reuse classic top-down grammars in PEG-based tools.
So grammars written for tools such as ANTLR~\cite{antlr:parr} could be reused in
a parser tool that has PEGs as its backend language. As PEGs are composable,
these grammars can then be used as components in larger grammars. The
language designer can then start with a simple, LL(1) or strong-LL($k$) subset
of the language, and then grow it into the full language.
 
The rest of this paper is organized as follows: Section 2 presents our new semantics for CFGs and PEGs, showing how to arrive at the latter from the former, and proves their correctness. Section 3 shows how an LL(1) grammar describes the same language when interpreted as a PEG, and proves this correspondence. Section 4 shows how a simple transformation generates a PEG from any strong-LL($k$) grammar, keeping the same general structure and describing the same language as the original grammar, and proves the latter assertion. Section 5 shows the equivalence between some right-linear CFGs and PEGs, and how this can be used to build a simple transformation that generates a PEG from an LL-regular grammar. Finally, Section 6 reviews related work, and Section 7 summarizes the paper's contributions and gives our final remarks.

\section{From CFGs to PEGs}
\label{sec:cfgpeg}

\newcommand{\matchg}{\ensuremath{\stackrel{\mbox{\tiny{CFG}}}{\leadsto}}}

This section presents a new definition of CFGs, based on natural semantics,
and shows how from it we can establish a correspondence between CFGs and PEGs,
which formalization is also given through natural semantics. 

The section is structured as follows: Subsection~\ref{subsec:pecfg} briefly reviews
the traditional definition of CFGs and presents our new definition of CFGs, that
is called PE-CFGs and uses parsing expressions, borrowed from PEGs, and natural semantics.
Subsection~\ref{subsec:cfgpecfg} shows how we can obtatin a PE-CFG from a CFG and proves
that both definitions are equivalent. Next, Subsection~\ref{subsec:pecfgpeg} discusses the
relationship between PE-CFGs and PEGs, and defines PEGs by adapting the formalization of PE-CFGs.
Finally, Subsection~\ref{subsec:pegford} proves that our definition of PEGs is equivalent to Ford's
definition.

\subsection{From CFGs to PE-CFGs}
\label{subsec:pecfg}

The traditional definition of a CFG is as a tuple $(V, T, P, S)$ of a finite set $V$ of non-terminals symbols, a finite set $T$ of terminal symbols, a finite relation $P$ between non-terminals and strings of terminals and non-terminals, and an initial non-terminal $S$. We say that $A \rightarrow \beta$ is a {\em production} of $G$ if and only if $(A, \beta) \in P$. 

A grammar $G$ defines a relation $\Rightarrow_{G}$ where $\alpha A \gamma \Rightarrow_{G} \alpha \beta \gamma$ if and only if $A \rightarrow \beta$ is a production of $G$. The language of G is the set of all strings of terminal symbols that relate to $S$ by the reflexive-transitive closure of $\Rightarrow_{G}$. We can interpret the relation $\Rightarrow_{G}$ as a rewriting step, and then the language of $G$ is the set of all strings of terminals that can be generated from $S$ by a finite number of rewriting steps.

We want to give a new definition for CFGs that is closer to PEGs, so the similarities between the two formalisms will be more visible. Our new definition begins by borrowing the concept of a {\em parsing expression} from PEGs. The abstract syntax of parsing expressions is given below:
\[
p \; = \;\,\Epsi \;\; \Bvert
       \;\;  a \;\;         \Bvert
       \;\;  A \;\;         \Bvert
       \;\; \Con{p_1}{p_2} \;\; \Bvert
		   \;\; \Choice{p_1}{p_2} 
\]

Parsing expressions are defined inductively as the empty expression $\varepsilon$, a terminal symbol $a$, a non-terminal symbol $A$, a concatenation $p_1p_2$  of two parsing expressions $p_1$ and $p_2$, or a {\em choice} $\Choice{p_1}{p_2}$ between two parsing expressions $p_1$ and $p_2$.

We now define a PE-CFG (short for {\em CFG using parsing expressions}) $G$ as a tuple $(V, T, P, p_{S})$, where $V$ and $T$ are still the sets of non-terminals and terminals, but $P$ is now a function from non-terminals to parsing expressions, and $p_{S}$ is the initial parsing expression of the grammar. As $P$ is a function, we will use the standard notation for function application, $P(A)$, to refer to the parsing expression associated with a non-terminal $A$ in $G$.

Instead of the relation $\Rightarrow_{G}$, we define a new relation, $\matchg$, among a grammar $G$, a string of terminal symbols $v$, and another string of terminal symbols $w$. We will use the notation $G \, v \matchg w$ to say that $(G, v, w) \in \, \matchg$. The intuition for the $\matchg$ relation is that the first string is the {\em input}, and the second string is a suffix of the input that is left after $G$ matches a prefix of this input. We will usually say $G \, xy \matchg y$ to mean that $G$ matches a prefix $x$ of input string $xy$.


\begin{figure*}[t]
{
\small
\begin{align*}
& \textbf{Empty} \fivespaces
{\frac{}{\Matg{\Epsi}{x} \Lg \Just{x}}} \mylabel{empty.1}
\tenspaces
\textbf{Terminal} \fivespaces
{\frac{}{\Matg{a}{ax} \Lg \Just{x}}} \mylabel{char.1}  \\ \\ 
& \textbf{Non-terminal} \;\;\;
{\frac{\Matg{P(A)}{xy} \Lg \Just{y}}
	{\Matg{A}{xy} \Lg \Just{y}}}  \mylabel{var.1} \fivespaces 
\\ \\ & \textbf{Concatenation} \;\;\;
{\frac{\Matg{p_1}{xyz} \Lg \Just{yz} \interf \Matg{p_2}{yz} \Lg \Just{z}}
{\Matg{\Con{p_1}{p_2}}{xyz} \Lg \Just{z}}} \mylabel{con.1} \\ \\
& \textbf{Choice} \tenspaces
{\frac{\Matg{p_1}{xy} \Lg \Just{y}}
	{\Matg{\Choice{p_1}{p_2}}{xy} \Lg \Just{y}}} \mylabel{choice.1} \tenspaces
{\frac{\Matg{p_2}{xy} \Lg \Just{y}}
	{\Matg{\Choice{p_1}{p_2}}{xy} \Lg \Just{y}}} \mylabel{choice.2} 
\end{align*}
\caption{Natural semantics of \matchg}
\label{fig:matchcfg}
}
\end{figure*}

Figure~\ref{fig:matchcfg} shows our semantics for $\matchg$ using natural semantics, as a set of inference rules. $G \, xy \matchg y$ if and only if there is a finite proof tree for it, built using these rules. The notation $G[p^{\prime}_{S}]$ denotes a new grammar $(V, T, P, p^{\prime}_{S})$ that is equal to $G$ except for the initial parsing expression $p_{S}$, which is replaced by $p^{\prime}_{S}$. Each rule follows naturally from the intuition of $\matchg$: an empty parsing expression does not consume any input ({\bf empty.1}); a terminal consumes itself if it is the first symbol of the input ({\bf char.1}); a non-terminal matches its corresponding production in $P$ ({\bf var.1}); a concatenation first matches $p_1$ and then matches $p_2$ with what is left of the input ({\bf con.1}); and a choice can match either $p_1$ or $p_2$ ({\bf choice.1} and {\bf choice.2}, respectively). The rules guarantee that if $G \, v \matchg w$ then $w$ is a suffix of $v$.

The language of $G$, $L(G)$, is now the set of prefixes that $G$ matches, that is, all strings $x$ where $G \, xy \matchg y$ for some string $y$. In the traditional definition of CFGs, the language of a grammar is the set of strings the grammar generates; in our new definition, the language is the set of strings the grammar matches. We could have defined the language of $G$ as the set of strings $x$ where $G \, x \matchg \varepsilon$, that is, the set of strings that $G$ matches completely, but it is a corollary of the following lemma that the two definitions are equivalent:

\begin{lemma}
\label{cfg:lang}
Given a PE-CFG $G$, if $G[p] \, xy \matchg y$ then we have $\forall y^{\prime} . G[p] \, xy^{\prime} \matchg y^{\prime}$.
\end{lemma}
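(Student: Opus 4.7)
The plan is to proceed by induction on the height of the derivation tree witnessing $G[p] \, xy \matchg y$. The underlying intuition is that the match relation only inspects the prefix actually consumed from the input; the suffix $y$ merely ``sits at the end'' and never plays any role in any rule. Consequently, a derivation of $G[p]\, xy \matchg y$ should be replayable, step by step, on $xy'$ to obtain $G[p]\, xy' \matchg y'$.

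In the base cases the matched prefix is pinned down by the shape of the rule. For \textbf{empty.1} we must have $x = \varepsilon$, and a single application of \textbf{empty.1} to the new input $y'$ gives $G[\varepsilon]\, y' \matchg y'$. For \textbf{char.1} we must have $x = a$, and the desired $G[a]\, a y' \matchg y'$ is again an instance of the same rule. The cases \textbf{var.1}, \textbf{choice.1}, and \textbf{choice.2} are similarly direct: each has a unique sub-derivation matching the same prefix $x$ with strictly smaller height, so a single application of the induction hypothesis followed by the very same rule finishes the case.

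The only inductive step requiring care is \textbf{con.1}. Here $p = \Con{p_1}{p_2}$, and the matched prefix $x$ decomposes uniquely as $x = x_1 x_2$ with shorter-height premises $G[p_1]\, x_1 x_2 y \matchg x_2 y$ and $G[p_2]\, x_2 y \matchg y$. Applying the induction hypothesis to the first premise with new suffix $x_2 y'$ yields $G[p_1]\, x_1 x_2 y' \matchg x_2 y'$, and applying it to the second premise with new suffix $y'$ yields $G[p_2]\, x_2 y' \matchg y'$. Combining these two conclusions through \textbf{con.1} produces $G[\Con{p_1}{p_2}]\, x y' \matchg y'$, as required.

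No genuine obstacle is expected: the proof is a routine structural induction on derivations rather than on expressions (the latter would fail on recursive non-terminals, since unfolding $P(A)$ need not decrease the size of $p$). The only bookkeeping lies in the concatenation case, namely identifying the correct split of the matched prefix $x$ into $x_1 x_2$ as dictated by the pattern $xyz$ in the \textbf{con.1} rule, and then instantiating the induction hypothesis with the appropriate new suffix ($x_2 y'$ for the first premise and $y'$ for the second) so that the two results compose again through \textbf{con.1}.
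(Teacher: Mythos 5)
Your proof is correct and follows exactly the route the paper takes: induction on the height of the proof tree for $G[p]\,xy \matchg y$, replaying each rule on the new suffix $y^{\prime}$, with the only nontrivial bookkeeping in the \textbf{con.1} case. The paper states this proof in one line; your write-up simply fills in the details of that same argument.
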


\begin{proof} By induction on the height of the proof tree for $G[p] \, xy \matchg y$.
\end{proof}

The previous lemma shows that the suffix in the relation $\matchg$ is superfluous; we could have defined relation $\matchg$ as a binary relation between a grammar $G$ and an input $w$, meaning just {\em $G$ recognizes $w$}. We chose to keep the suffix to emphasize the similarities between this semantics and our semantics for PEGs, where the suffix matters.

\subsection{Correspondence between CFGs and PE-CFGs}
\label{subsec:cfgpecfg}

We need a way to systematically transform a traditional CFG $G$ to a corresponding PE-CFG $G^\prime$, and vice-versa. To simplify our proofs, we will assume that grammars do not have useless symbols. The main obstacle for these transformations is the type of $P$. $P$ is a relation for CFGs, with different productions for the same non-terminal being different entries in this relation. In PE-CFGs, however, $P$ is a function, with all the different productions encoded as choices in the parsing expression for the non-terminal.

The choice operator is commutative, associative, and idempotent; both left and right concatenations distribute over choice, that is, $p_1(\Choice{p_2}{p_3}) = \Choice{p_1p_2}{p_1p_3}$ and $(\Choice{p_1}{p_2})p_3 = \Choice{p_1p_3}{p_2p_3}$\footnote{The proof of these properties is straightforward from the semantics of both operators.}. So any parsing expression may be rewritten as a choice $\Choice{p_1}{\Choice{\ldots}{p_n}}$, where the subexpressions $p_1, \ldots, p_n$ are distinct and do not have choice operators. We can then go from a PE-CFG $G^\prime$ to a CFG $G$ using $A \rightarrow p_1, \ldots, A \rightarrow p_n$ as the productions of each non-terminal $A$, where $p_1, \ldots, p_n$ are the subexpressions obtained by rewriting the expression $P^\prime(A)$ in the way above. 

Going from a CFG $G$ to a PE-CFG $G^\prime$ is easier: the right side of each production of $G$ is a concatenation of non-terminals and terminals, which translates directly to a concatenation of parsing expressions (the concatenation of expressions is associative); we assign an arbitrary order to the productions of each non-terminal $A$ of $G$, and then combine these productions right-associatively into a choice expression, and this is $P^\prime(A)$. We will call the transformation of CFGs to PE-CFGs $\mathcal{T}$, so $\mathcal{T}(G) = G^\prime$.

As an example, take the CFG $G$ with the following set of productions:
\[
P \Myeq \{\, A \arrow \Con{B}{C},\;
           B \arrow a,\; B \arrow b,\;
           C \arrow c,\; C \arrow d,\; C \arrow e \,\}
\]

Its corresponding PE-CFG $\mathcal{T}(G) = G^\prime$ has the following definition for the function $P^{\prime}$:
\[
P^{\prime}(A) \Myeq \Con{B}{C}    \tenspaces 
P^{\prime}(B) \Myeq \Choice{a}{b} \tenspaces
P^{\prime}(C) \Myeq \Choice{c}{\Choice{d}{e}}
\]

We used the order that we listed the productions of $G$ to order the choices, but commutativity and associativity of the choice operator guarantees that any other order would yield a grammar with the same language as $G^{\prime}$, so we could have used the following definition for $P^{\prime}$ instead:
\[
P^{\prime}(A) \Myeq \Con{B}{C}    \tenspaces 
P^{\prime}(B) \Myeq \Choice{a}{b} \tenspaces
P^{\prime}(C) \Myeq \Choice{e}{\Choice{c}{d}}
\]

The proof that $G$ and \PE{G} define the same language for any CFG $G$ is a direct corollary of the following lemma:

\begin{lemma}
\label{lemma:cfgequiv}
Given a CFG $G$ and its corresponding PE-CFG $\PE{G} = G^\prime$, we have $\alpha \stackrel{*}{\Rightarrow}_{G} x$ if and only if $G^{\prime}[\alpha] \, xy \matchg y$, where $x$ is a string of terminals and $\alpha$ is a string of terminals and non-terminals.
\end{lemma}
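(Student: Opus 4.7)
The plan is to prove both directions by induction, exploiting the bijection between the productions $A \arrow p_i$ of $G$ and the leaves of the right-associative nested choice $\Pp(A) = \Choice{p_1}{\Choice{p_2}{\ldots p_n}}$ in $\Gp$.

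For the forward direction ($\Rightarrow$), I would induct on the number of steps of a leftmost derivation $\alpha \Rigs x$. The base case (zero steps) forces $\alpha = x$ to be a string of terminals, and the proof tree for $\Matp{x}{xy} \Lg y$ is a spine of \textbf{con.1} applications with \textbf{char.1} at each terminal leaf and \textbf{empty.1} at the end. For the inductive step, write the first leftmost rewrite as $\alpha = wA\gamma \Rig w\beta\gamma \Rigs x$ where $w$ is a string of terminals, so $x = wx^\prime$ and $\beta\gamma \Rigs x^\prime$ in one fewer step. A standard CFG split of the remaining derivation yields $x^\prime = x_\beta x_\gamma$ with $\beta \Rigs x_\beta$ and $\gamma \Rigs x_\gamma$, each strictly shorter than the original derivation, so the induction hypothesis delivers $\Matp{\beta}{x_\beta z} \Lg z$ and $\Matp{\gamma}{x_\gamma y} \Lg y$ for any chosen suffix $z$ (Lemma~\ref{cfg:lang} lets the suffix be changed freely). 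I would then assemble the target proof tree: \textbf{con.1} to peel off $w$ and to separate $A$ from $\gamma$, \textbf{var.1} to unfold $A$ into $\Pp(A)$, a chain of \textbf{choice.1}/\textbf{choice.2} steps to descend to the arm containing $\beta$, and finally the sub-proofs for $\beta$ and $\gamma$.

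For the backward direction ($\Leftarrow$), I would induct on the height of the proof tree of $\Matp{\alpha}{xy} \Lg y$. Since $\alpha$ is a string over $V \cup T$ and therefore contains no choice operator, the root rule must be \textbf{empty.1}, \textbf{char.1}, \textbf{con.1}, or \textbf{var.1}. The first two cases are immediate, and \textbf{con.1} reduces to two strictly smaller instances whose resulting derivations splice into a derivation of the concatenation. The interesting case is \textbf{var.1}: its premise is a proof of $\Matp{\Pp(A)}{xy} \Lg y$ of smaller height, and because $\Pp(A)$ is a nested choice the proof tree must begin with a (possibly empty) spine of \textbf{choice.1}/\textbf{choice.2} nodes ending at a proof of $\Matp{p_i}{xy} \Lg y$ for exactly one $i$. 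The induction hypothesis then gives $p_i \Rigs x$, and combining with the production $A \arrow p_i$ of $G$ yields $A \Rig p_i \Rigs x$.

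The main obstacle I anticipate is the bookkeeping around the nested choice $\Pp(A)$: proving both that every proof of $\Matp{\Pp(A)}{xy} \Lg y$ factors through exactly one arm $p_i$, and conversely that any proof for $p_i$ can be lifted into one for $\Pp(A)$ by prefixing the correct sequence of choice rules. Isolating this as a small auxiliary lemma about choice-only proof trees should make the rest of the structural induction routine, leaving the standard decomposition of a CFG derivation across a concatenation as the only remaining piece of manual bookkeeping.
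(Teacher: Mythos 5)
Your proposal is correct and follows essentially the same route as the paper's proof: induction on the length of the derivation for the forward direction (with the standard split of $x$ across the concatenation and a chain of \textbf{choice} rules to reach the arm for $A \rightarrow \beta$), and induction on the height of the proof tree for the converse, with \textbf{var.1} handled by peeling off the spine of \textbf{choice} rules above $P^{\prime}(A)$. Your use of a leftmost derivation and of Lemma~\ref{cfg:lang} for suffix bookkeeping are minor presentational variations, not a different argument.
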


\begin{proof} ($\Rightarrow$) By induction on the number of steps in the derivation of $x$. The base case, where $\alpha = x$, is trivial, with an application of the {\bf empty.1} rule or repeated applications of the {\bf con.1} and {\bf char.1} rules. 

The induction step has $\alpha$ composed of three parts: a prefix $\alpha^{\prime}$, a non-terminal $A$, and a suffix $\gamma$, with $\alpha^{\prime} A \gamma \Rightarrow_{G} \alpha^{\prime} \beta \gamma \stackrel{*}{\Rightarrow}_{G} x$. By the properties of $\stackrel{*}{\Rightarrow}_{G}$, $x$ can be decomposed into $x_1$, $x_2$ and $x_3$ with $\alpha^{\prime} \stackrel{*}{\Rightarrow}_{G} x_1$, $\beta \stackrel{*}{\Rightarrow}_{G} x_2$, and $\gamma \stackrel{*}{\Rightarrow}_{G} x_3$. By the induction hypothesis we have $G^{\prime}[\alpha^{\prime}] \, x_1x_2x_3y \matchg x_2x_3y$,
$G^{\prime}[\beta] \, x_2x_3y \matchg x_3y$, and $G^{\prime}[\gamma] \, x_3y \matchg y$. We combine these proof trees in a proof tree for $G^{\prime}[\alpha^{\prime} A \gamma] \, xy \matchg y$ with rules {\bf con.1}, {\bf var.1}, and applications of the {\bf choice} rules to select the alternative corresponding to production $A \rightarrow \beta$.

($\Leftarrow$) By induction on the height of the proof tree for $G^{\prime}[\alpha] \, xy \matchg y$. The interesting case is {\bf var.1}; we need to use the fact that the use of choice operators in $G^{\prime}$ follows a known structure, where each production is a right-associative choice of parsing expressions that do not have choice operators and correspond to the right side of productions in $G$. So the proof tree for $G^\prime[P^{\prime}(A)] \, xy \matchg y$ ends with a succession of {\bf choice} rules that select which of the alternatives is taken for that non-terminal. We apply the induction hypothesis to the subtree above the last {\bf choice} rule used, from the consequent to the antecedents.
\end{proof}

A corollary of Lemma~\ref{lemma:cfgequiv} is that $S \stackrel{*}{\Rightarrow}_{G} x$ if and only if $\PE{G} \, xy \matchg y$, so the language of $G$ and the language of $\PE{G}$ are the same.

A traditional CFG is ambiguous if and only if there is some string with more than one leftmost (or rightmost) derivation. We can define ambiguity for PE-CFGs via proof trees: a PE-CFG $G$ is ambiguous if and only if there is more than one proof tree for $G \, xy \matchg y$ for some $x$ and $y$.

We can show that a CFG $G$ is ambiguous if and only if its corresponding PE-CFG $\PE{G}$ is ambiguous. The proof is a corollary of the proposition that there is only one leftmost derivation for $\alpha \stackrel{*}{\Rightarrow}_{G} x$ if and only if there is only one proof tree for $G^{\prime}[\alpha] \, xy \matchg y$, where $x$ is a string of terminals and $\alpha$ is a string of non-terminals and terminals. This proposition has a straightforward proof by induction (on the number of steps in the derivation and on the height of the proof tree), and the corollary follows by denial of the consequent.

\newcommand{\matchgd}{\ensuremath{\stackrel{\mbox{\tiny{C\!F\!G}}_d}{\leadsto}}}

Ambiguity, in our semantics, is directly tied to the choice operator: if we try to prove that there cannot be more than one proof tree for a $G[p] \, xy \matchg y$, by induction on the height of the tree, our proof fails for case {\bf choice.1}, because even if there is only one proof tree for the $G[p_1] \, xy \matchg y$, we might have $G[p_2] \, xy \matchg y$, so we can get another proof tree for $G[\Choice{p_1}{p_2}] \, xy \matchg y$ by using {\bf choice.2}. The proof fails for case {\bf choice.2} in a similar way.

If we can change the semantics of choice so that a single proof tree for its antecedents guarantees a single proof tree for the choice then we will guarantee that all grammars will be unambiguous. Obviously we will not have CFGs anymore; in particular, we will invalidate Lemma~\ref{lemma:cfgequiv}. In fact, our changes will take us from CFGs to a restricted form of PEGs, and we will prove that our changed semantics is equivalent to the semantics of PEGs as defined by Ford~\cite{pegford}.

\subsection{From PE-CFGs to PEGs}
\label{subsec:pecfgpeg}

\newcommand{\matchpeg}{\ensuremath{\stackrel{\mbox{\tiny{PEG}}}{\leadsto}}}

Now we will discuss in detail how we can obtain the semantics of PEGs
by changing the semantics of $\Lg$ presented in Figure~\ref{fig:matchcfg}.
 
In order to define the semantics of PEGs, we will make the choice operator {\em ordered}: in a choice $\Choice{p_1}{p_2}$ we try $p_2$ only if $p_1$ does not match. But we need a way to have a proof tree for ``$p_1$ does not match'', so we will also introduce an explicit failure result, $\Nothing$, to indicate the cases where a match is not possible. We will combine these changes in the semantics of a new relation $\matchpeg$. Figure~\ref{fig:matchcfgd} lists its inference rules, where $\Anyx$ means either $\Nothing$ or the remainder of the input string in a successful match.

\begin{figure*}[t]
{\small
\begin{align*}
& \textbf{Empty} \fivespaces
{\frac{}{\Matg{\Epsi}{x} \matchpeg \Just{x}}} \mylabel{empty.1} 
\tenspaces
\textbf{Non-terminal} \fivespaces
{\frac{\Matg{P(A)}{x} \matchpeg \Just{\Anyx}}
	{\Matg{A}{x} \matchpeg \Just{\Anyx}}}    \mylabel{var.1}  \\ \\ 
& \textbf{Terminal} \;\;\;
{\frac{}{\Matg{a}{ax} \matchpeg \Just{x}}} \mylabel{char.1} \;\;\;
{\frac{}{\Matg{b}{ax} \matchpeg \Nothing}} \mbox{ , } b \neq a \mylabel{char.2} \\
& \fivespaces \fivespaces \fivespaces \; \; \; \; {\frac{}{\Matg{a}{\Epsi} \matchpeg \Nothing}} \mylabel{char.3}  \\ \\
& \textbf{Concatenation}
\fivespaces
{\frac{\Matg{p_1}{xy} \matchpeg \Just{y} \interf \Matg{p_2}{y} \matchpeg \Just{\Anyx}}
	{\Matg{\Con{p_1}{p_2}}{xy} \matchpeg \Just{\Anyx}}} \mylabel{con.1} \\
& \fivespaces \fivespaces \fivespaces \fivespaces \fivespaces \fivespaces \; {\frac{\Matg{p_1}{x} \matchpeg \Nothing}
	{\Matg{\Con{p_1}{p_2}}{x} \matchpeg \Nothing}} \mylabel{con.2} \\ \\
& \textbf{Ordered Choice} 
\fivespaces
{\frac{\Matg{p_1}{xy} \matchpeg \Just{y}}
	{\Matg{\Choice{p_1}{p_2}}{xy} \matchpeg \Just{y}}} \mylabel{ord.1} \\
& \fivespaces \fivespaces \fivespaces \fivespaces \fivespaces \fivespaces \;\;\;\;
 {\frac{\Matg{p_1}{xy} \matchpeg \Nothing \interf \Matg{p_2}{xy} \matchpeg \Just{y}}
	{\Matg{\Choice{p_1}{p_2}}{xy} \matchpeg \Just{y}}} \mylabel{ord.2} \\
& \fivespaces \fivespaces \fivespaces \fivespaces \fivespaces \fivespaces \;\;\;\;
 {\frac{\Matg{p_1}{x} \matchpeg \Nothing \interf \Matg{p_2}{x} \matchpeg \Nothing}
	{\Matg{\Choice{p_1}{p_2}}{x} \matchpeg \Nothing}} \mylabel{ord.3}
\end{align*}
\caption{Natural semantics of $\matchpeg$}
\label{fig:matchcfgd}
}
\end{figure*}

Just introducing $\Nothing$ does not change the semantics enough to be incompatible with regular CFGs; if we take the semantics of Figure~\ref{fig:matchcfgd} and replace rule {\bf ord.2} with {\bf choice.2} then we have a conservative extension of our PE-CFG semantics that introduces $\Nothing$, so all our previous proofs remain valid. Ordered choice, represented by rule {\bf ord.2}, is what changes the semantics so it is not representing CFGs anymore. A simple example that shows this change is the grammar $G$ below:
\[
S \rightarrow AB \tenspaces \fivespaces
A \rightarrow \Choice{aba}{a} \tenspaces \fivespaces
B \rightarrow b
\]

We have $G \, abac \matchg ac$, but $G \, abac \not\matchpeg ac$, as the only proof tree under $\matchpeg$ for the input string $abac$ is for $G \, abac \matchpeg \Nothing$. 

We will use  $L^{\mbox{\tiny{P\!E\!G}}}(G)$ for the language of a PE-CFG $G$ interpreted with  $\matchpeg$; as with $\matchg$, this is the set of strings $x$ for which there is a string $y$ with $G \, xy \matchpeg y$.  Informally, this set is still the set of all the prefixes that $G$ matches, only using $\matchpeg$ instead of $\matchg$. But there is no equivalent of Lemma~\ref{cfg:lang} for the $\matchpeg$ relation; for example, the grammar above matches $ab$ but fails for $abac$.

Properties of the operators also change under $\matchpeg$: the choice operator is not commutative, and concatenation does not distribute over choice on the right anymore (although it still distributes on the left).

In Section~\ref{sec:ll1peg}, we will show a class of PE-CFGs where $L(G) = L^{\mbox{\tiny{PEG}}}(G)$. For now, an interesting result is the following lemma, which proves that $L^{\mbox{\tiny{PEG}}}(G)$ is a subset of $L(G)$ for any PE-CFG $G$:

\begin{lemma}
\label{cfgd:subset}
Given a PE-CFG $G$, if $G[p] \, xy \matchpeg y$ then we have $G[p] \, xy \matchg y$.
\end{lemma}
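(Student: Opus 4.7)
The plan is a straightforward induction on the height of the proof tree for $G[p] \, xy \matchpeg y$. Since the conclusion fixes the result of the match to be a string $y$ (not $\Nothing$), I only need to consider those inference rules of $\matchpeg$ whose conclusion is a successful match, namely \textbf{empty.1}, \textbf{char.1}, \textbf{var.1} (with a string result), \textbf{con.1} (with a string result), \textbf{ord.1}, and \textbf{ord.2}. The failure rules \textbf{char.2}, \textbf{char.3}, \textbf{con.2}, and \textbf{ord.3} cannot appear at the root of such a proof tree.

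The base cases \textbf{empty.1} and \textbf{char.1} are immediate: the corresponding CFG rules have identical shape, so I rebuild the same one-node tree under $\matchg$. For \textbf{var.1}, the antecedent is $G[P(A)] \, xy \matchpeg y$, to which I apply the induction hypothesis to obtain $G[P(A)] \, xy \matchg y$, and then apply the CFG rule \textbf{var.1}. For \textbf{con.1}, the successful instance has the shape $G[p_1]\,xy \matchpeg y$ and $G[p_2]\,y \matchpeg y'$ with $y = wy'$, so I split $xy$ as $x w y'$, apply the induction hypothesis to each antecedent (each is strictly shorter), and then apply the CFG rule \textbf{con.1} to reassemble the proof tree.

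The choice cases are the only ones worth special mention. For \textbf{ord.1}, the single antecedent $G[p_1]\,xy \matchpeg y$ yields by induction $G[p_1]\,xy \matchg y$, from which \textbf{choice.1} under $\matchg$ gives the conclusion. For \textbf{ord.2}, the rule has two antecedents, but the first one asserts failure of $p_1$ and is irrelevant to the CFG derivation; I simply apply the induction hypothesis to the second antecedent $G[p_2]\,xy \matchpeg y$ to obtain $G[p_2]\,xy \matchg y$, and then use \textbf{choice.2} under $\matchg$. This discarding of the failure antecedent is the conceptual crux of the lemma: the PEG demands that $p_1$ fail before trying $p_2$, but the CFG is happy to pick $p_2$ freely, so any successful PEG derivation can only be more restrictive than the corresponding CFG derivation.

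I do not anticipate a real obstacle. The only subtle point is making sure the induction is well-formed given that $\matchpeg$ derivations can contain failing subderivations (as in the first antecedent of \textbf{ord.2} and \textbf{con.2}); but induction on proof-tree height, rather than on structure of the expression or of the success relation, handles this uniformly, since every antecedent — whether it concludes in a string or in $\Nothing$ — has strictly smaller height than the conclusion, and we only ever need to invoke the induction hypothesis on antecedents that themselves conclude in a string.
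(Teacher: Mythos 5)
Your proposal is correct and follows essentially the same route as the paper: induction on the height of the proof tree for the successful match, observing that every rule producing a string result has an identical counterpart under the CFG semantics except {\bf ord.2}, whose conclusion is recovered by dropping the failure antecedent and applying {\bf choice.2}. The paper states this in one line; your case analysis is just the expanded version of the same argument.
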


\begin{proof} By induction on the height of the proof tree for $G[p] \, xy \matchpeg y$. The only rule that does not have an identical rule in $\matchg$ is {\bf ord.2}, but it can trivially be replaced by {\bf choice.2}.
\end{proof}

The intuition of $G \, x \matchpeg \Nothing$ is that $G$ does not match any prefix of $x$ (including the empty string). This is a corollary of the following lemma, which formally says that the result of $G[p] \, x$ is unique under $\matchpeg$, for any $G$, $p$ and $x$:

\begin{lemma}
\label{peg:unamb}
Given a PE-CFG $G$, if $G[p] \, x \matchpeg X$ and $G[p] \, x \matchpeg X^\prime$ then we have $X = X^\prime$, and there is only one proof tree for $G[p] \, x \matchpeg X$.
\end{lemma}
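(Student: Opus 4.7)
The plan is to prove both statements simultaneously by strong induction on the height of the proof tree for $G[p] \, x \matchpeg X$, with a case analysis on the last rule applied. Let $T_1$ denote this tree and let $T_2$ be any other proof tree for $G[p] \, x \matchpeg X^\prime$; I will argue that $T_2$ must end in the ``same'' rule as $T_1$ and that $X = X^\prime$ together with $T_1 = T_2$ then follows by applying the induction hypothesis to the subtrees.

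The easy cases are those where the last rule of $T_1$ is uniquely determined by $p$ and $x$: {\bf empty.1} forces $p = \Epsi$ and the only rule then available is itself; for {\bf char.1}, {\bf char.2}, {\bf char.3} the three rules are mutually exclusive based solely on whether $x$ starts with the expected terminal, with a different one, or is empty; and {\bf var.1} is the only rule that can fire when $p$ is a non-terminal, so $T_2$ must also end in {\bf var.1} with a subproof for $G[P(A)] \, x \matchpeg X^\prime$, to which the induction hypothesis applies since its height is strictly less than that of $T_1$.

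The interesting cases are concatenation and ordered choice, because here the rule used by $T_2$ is not immediately forced. Suppose $T_1$ ends in {\bf con.1}, so $p = \Con{p_1}{p_2}$, $x = x_1y$, and the antecedents are $G[p_1] \, x \matchpeg y$ and $G[p_2] \, y \matchpeg X$. The last rule of $T_2$ is either {\bf con.1} or {\bf con.2}. In the {\bf con.2} case, its antecedent would be $G[p_1] \, x \matchpeg \Nothing$; applying the induction hypothesis to the $p_1$-subtree of $T_1$ and this antecedent yields $\Nothing = y$, a contradiction since $y$ is a string. So $T_2$ must also end in {\bf con.1}, and two successive applications of the induction hypothesis (first to the $p_1$-subtrees, yielding a common $y$ and equal subtrees, then to the $p_2$-subtrees) give $X = X^\prime$ and $T_1 = T_2$. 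The case where $T_1$ ends in {\bf con.2} is symmetric, as are the three ordered-choice cases {\bf ord.1}, {\bf ord.2}, {\bf ord.3}: in each, any mismatched choice of rule in $T_2$ forces, via the induction hypothesis applied to a $p_1$- or $p_2$-subtree, a contradiction between a ``success'' result (a suffix) and a ``failure'' result ($\Nothing$).

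The main obstacle is purely bookkeeping: one must be careful to apply the induction hypothesis to the correct subtree of $T_1$ (which has strictly smaller height) rather than to $T_2$, so that the induction on the height of $T_1$ really suffices to handle the uniqueness of $T_2$ as well. Aside from that, no subtlety remains, and the proof goes through by the systematic case analysis sketched above.
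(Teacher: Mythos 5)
Your proof is correct and follows essentially the same strategy as the paper's: induction on the height of the proof tree, with the key observation that in the choice (and concatenation) cases the induction hypothesis, applied to a subtree of the first tree, rules out the competing rule by forcing a contradiction between a success result and $\Nothing$. You spell out more cases (the concatenation rules and {\bf ord.3}) than the paper, which only sketches {\bf ord.1} and {\bf ord.2}, but the argument is the same.
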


\begin{proof} By induction on the height of the proof tree for $G[p] \, x \matchpeg X$. The interesting cases are {\bf ord.1} and {\bf ord.2}; for {\bf ord.1}, the induction hypothesis rules out the possibility of $G[p_1] \, x \matchpeg \Nothing$, so {\bf ord.2} cannot apply even if we have $G[p_2] \, x \matchpeg X$. For {\bf ord.2}, we must have $G[p_1] \, x \matchpeg \Nothing$ by the induction hypothesis; even if $X$ is $\Nothing$ we cannot use rule {\bf ord.1}.
\end{proof}

The PE-CFGs that we will be dealing with in the rest of the paper will have an important property that becomes possible to express by introducing failure: they will be {\em complete} grammars~\cite{pegford}. A complete PE-CFG is one where for any expression $p$ and any input $x$ either $G[p] \, x \matchpeg x^\prime$ or $G[p] \, x \matchpeg \Nothing$. Ford~\cite{pegford} proves that any grammar that does not have direct or indirect left recursion (a property which can be structurally checked) is complete.

A PE-CFG $G$ is left-recursive when there is a non-terminal $A$ of $G$ and an input $x$ where trying to derive a proof tree for $G[A] \, x$ can make $G[A] \, x$ appear again higher up in the tree. Because the semantics of $\matchpeg$ is deterministic this means that a left-recursive PE-CFG may not have any proof tree for $G[A] \, x$ under $\matchpeg$; in this case, an implementation of PEGs that tries to match $x$ with the expression $G[A]$ will not terminate.

Once we have failure and unambiguity, it is natural to introduce a way to turn a failure into a success. This is the {\em not} syntactic predicate ($!p$ for any parsing expression $p$), which is a conservative extension of our semantics described in Figure~\ref{fig:not}.

\begin{figure*}[t]
{\small
\begin{align*}
& \textbf{Not Predicate} \tenspaces \fivespaces
{\frac{\Matg{p}{x} \Lp \Nothing} 
	{\Matg{!p}{x} \Lp \Just{x}}} \mylabel{not.1} \tenspaces
{\frac{\Matg{p}{xy} \Lp \Just{y}}
	{\Matg{!p}{xy} \Lp \Nothing}} \mylabel{not.2} 
\end{align*}
\caption{Natural Semantics of the {\em not} predicate}
\label{fig:not}
}
\end{figure*}

PE-CFGs extended with the not-predicate and interpreted using $\matchpeg$ are equivalent, syntactically as well as semantically, to PEGs. Ford~\cite{pegford} also included the repetition operator in the abstract syntax of PEGs, but eliminating  repetition is a simple matter of replacing each repetition expression $p^*$ with a new non-terminal $A_p$ with the production $A_p \rightarrow \Choice{pA_p}{\varepsilon}$, which is just a step up from simple syntactic sugar.

\subsection{Correspondence with Ford's Defintion}
\label{subsec:pegford}

Ford~\cite{pegford} defines the semantics of PEGs using a relation $\Rightarrow_{G}$ that is similar to $\matchpeg$. Unlike the relation $\Rightarrow_{G}$ for traditional CFGs, Ford's $\Rightarrow_{G}$ is not a single step in the match, but the whole match. The notation $(p, x) \Rightarrow_{G} (n, X)$, for $(p, x, n, X) \in \, \Rightarrow_{G}$, means that either the parsing expression $p$ matches the prefix $x^\prime$ of input $x$, if $X$ is $x^\prime$, or the match fails, if $X$ is $\Nothing$. The number $n$ is a step counter, used in proofs by induction involving $\Rightarrow_{G}$.

Ford's definition of relation $\Rightarrow_{G}$ is similar to our definition of the $\matchpeg$ relation, using a similar set of cases. The following lemma states that both definitions are equivalent:

\begin{lemma} Given a PE-CFG $G$ and a parsing expression $p$, 
$(p, xy) \Rightarrow_{G} (n, x)$ if and only if $G[p] \, xy \matchpeg y$ and
$(p, xy) \Rightarrow_{G} (n, \Nothing)$ if and only if $G[p] \, xy \matchpeg \Nothing$.
\end{lemma}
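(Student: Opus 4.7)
The plan is to treat the two biconditionals as a single conjunctive claim and prove them together, since many cases of the induction mix success and failure derivations. For the ($\Rightarrow$) direction I would do induction on Ford's step counter $n$; for ($\Leftarrow$) I would do induction on the height of the $\matchpeg$ proof tree. In each direction, the outermost rule of the consequent determines the case, and the two formalisms are set up so that every rule of one has a structurally identical counterpart in the other (empty, terminal success and failure, non-terminal unfolding via $P(A)$, concatenation with success and with failure propagation, the three ordered-choice sub-cases, and the two rules for the not-predicate from Figure~\ref{fig:not}).

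Concretely, for each case I would strip off the outermost rule, apply the induction hypothesis to the immediate antecedent(s), and then reassemble the result using the matching rule in the other system. Rules such as {\bf empty.1}, {\bf char.1}--{\bf char.3}, and {\bf var.1} are immediate. For {\bf con.1} the hypothesis provides two success derivations whose step counters (or subtree heights) are strictly smaller than the whole, and for {\bf con.2} the hypothesis provides the failure of $p_1$, which then propagates; rule {\bf ord.1} is analogous to {\bf con.1}. Rule {\bf ord.3} uses two failure hypotheses, neither of which poses difficulty.

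The main obstacle will be rule {\bf ord.2}, where the conclusion $G[\Choice{p_1}{p_2}] \, xy \matchpeg y$ simultaneously requires the failure $G[p_1] \, xy \matchpeg \Nothing$ and the success $G[p_2] \, xy \matchpeg y$. This is precisely why neither biconditional can be proved in isolation: the induction must be carried out on the conjunction so that, at this case, both halves of the hypothesis apply to the respective subderivations. A parallel situation occurs on the ($\Rightarrow$) side for Ford's corresponding ordered-choice case; the step counter $n$ strictly decreases on each antecedent, so the hypothesis is available for both the failure of $p_1$ and the success of $p_2$. Once this case is handled, the remaining cases are routine, and the final step is observing that $L^{\mbox{\tiny{PEG}}}(G)$ defined via $\matchpeg$ coincides with the language defined by Ford's $\Rightarrow_G$, completing the equivalence between the two formalizations.
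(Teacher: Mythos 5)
Your proof takes essentially the same approach as the paper's: induction on Ford's step counter $n$ for one direction and on the height of the $\matchpeg$ proof tree for the other, with the two biconditionals carried as a single conjunction so that the mixed success/failure antecedents of {\bf ord.2} are available from the induction hypothesis. One small caveat: your closing remark that $L^{\mbox{\tiny{P\!E\!G}}}(G)$ coincides with the language defined by Ford's $\Rightarrow_G$ is not quite right---the paper explicitly notes the two language definitions differ (the language of $\varepsilon$ is $T^*$ for Ford but $\{\varepsilon\}$ here)---but this remark lies outside the lemma actually being proved.
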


\begin{proof}
The proof of the ($\Rightarrow$) direction is a straighforward induction
on the step count $n$, while the proof ($\Leftarrow$) is a straighforward
induction on the height of the proof tree for $G[p] \, xy \matchpeg X^\prime$. 
\end{proof}

Our definition for the {\em language} of a PEG is different from Ford's, though. Ford defines  $L^{\mbox{\tiny{PEG}}}$ as the set of strings for which a PEG recognizes some prefix of the string, while we use the set of strings that the PEG recognizes. In particular, the language of $\varepsilon$ is $T^*$ by Ford's definition and $\varepsilon$ with ours.

Ordered choice is what makes PEGs essentially different from CFGs, though, so one way to go from a CFG that can be parsed top-down to a PEG that parses the same language, without changing the structure of the grammar, is to make sure that the PEG always chooses the correct alternative at each choice. In Sections~\ref{sec:ll1peg},~\ref{sec:llkpeg}, and 5 we show how this intuition leads to translations from three classes of CFGs for top-down parsing, LL(1), Strong LL($k$), and LL-regular, to equivalent PEGs.

\section{LL(1) Grammars and PEGs}
\label{sec:ll1peg}

LL(1) grammars are the subset of CFGs where a top-down parser can decide which production to use for a non-terminal by examining just the next symbol of the input. An LL(1) parser can then parse the whole input by starting with the initial non-terminal of the grammar and then choosing which production to apply, making a choice again whenever it encounters a non-terminal, without needing to backtrack on its choices. A correspondence between them and PEGs has already been noted~\cite{compcompilers}, but not formally proven, so they are a nice starting point for applying our new semantics of CFGs and PEGs to the task of finding translations from subsets of CFGs to corresponding PEGs.

We will divide this task in two parts: first we will consider LL(1) grammars without $\varepsilon$ expressions and show that there is a correspondence between these grammars and PEGs. Then we will consider grammars with $\varepsilon$ expressions and show that there is a correspondence between these grammars and PEGs if the ordering of the choice expressions respects a simple property.

In Section~\ref{sec:cfgpeg} we presented a method of translating a traditional CFG to a PE-CFG, a CFG using parsing expressions. That method generates PE-CFGs with a property that will be useful in the proofs for this section; because we are going to use this property in our proofs, we will formalize it with the following definition:

\begin{description}
\item[BNF structure] A PE-CFG $G = (V, T, P, p_s)$ has {\em BNF structure} if it obeys the following properties:
\begin{enumerate}
\item No choice expression of $G$ is part of a concatenation expression;
\item $p_s$ is a single non-terminal;
\item For every choice $\Choice{p_1}{p_2}$ of $G$, if $p_1$ matches the empty string then $p_2$ must also match the empty string.
\end{enumerate}
\end{description}

Any traditional CFG $G$ has a corresponding PE-CFG $G^\prime$ that has BNF structure; in particular, it is trivial to ensure that \PE{G}  always has BNF structure. Properties 1 and 2 of BNF structure are an obvious outcome of the transformation $\mathcal{T}$: the expression associated to each non-terminal is of the form $\Choice{p_1}{\Choice{\ldots}{p_n}}$, where the choices associate to the right and $p_1, \ldots, p_n$ do not have choice expressions, so property 1 applies; the initial expression of $G^\prime$ is the initial non-terminal of $G$, so property 2 also applies; finally, property 3 can is guaranteed by choosing an order for the productions of each non-terminal of $G$ so the productions that can generate the empty string are last.

Any PE-CFG without BNF structure also can be rewritten to have it, by distributivity of concatenation over choice on the left and on the right, associativity and commutativity of choice, and the addition of an extra non-terminal to be the start expression. Throughout the rest of this section we will only consider PE-CFGs that have BNF structure in our definitions and proofs.

A traditional CFG without $\varepsilon$ productions is LL(1) if and only if, for each of its non-terminals $A_i$, the {\em FIRST} sets for the right sides of the productions of $A_i$ are disjoint. Before we can give a definition for LL(1) PE-CFGs, we need to define what is the {\em FIRST} set of a parsing expression. 
We will use the following definition for the {\em FIRST} set of an expression $p$ with a PE-CFG $G$:
\[
\mathit{FIRST}^{G}(p) = \{ a \in T \, | \, G[p] \, axy \matchg y \}
\]

This definition of {\em FIRST} is equivalent to the definition for traditional CFGs for any parsing expression that does not have choice operators (that is, any parsing expression that has a corresponding string of terminals and non-terminals). We can use the PE-CFG to CFG equivalence lemma (Lemma~\ref{lemma:cfgequiv}) to conclude that $p \stackrel{*}{\Rightarrow}_{G} ax$ from $\PE{G}[p] \, axy \matchg y$, where $G$ is a traditional CFG. The {\em FIRST} set of $p$ in the traditional definition is the set $\{ a \in T \, | \, p \stackrel{*}{\Rightarrow}_{G} a\beta \}$. As we assumed in Section~2 that $G$ does not have useless symbols, this is the same as the set $\{ a \in T \, | \, p \stackrel{*}{\Rightarrow}_{G} ax \}$, and the two definitions of {\em FIRST} are equivalent.

We can now give a definition for LL(1) PE-CFGs without $\varepsilon$ expressions: a PE-CFG $G$ without $\varepsilon$ expressions is LL(1) if and only if, for every choice $\Choice{p_1}{p_2}$ in the grammar, the {\em FIRST} sets of $p_1$ and $p_2$ are disjoint.

It is straightforward to prove that a traditional CFG $G$ without $\varepsilon$ productions is LL(1) if and only if its corresponding PE-CFG \PE{G} is also LL(1). The proof uses property 1 of BNF structure, associativity of choice, and the property that $\mathit{FIRST}^{\PE{G}}(\Choice{p_1}{p_2}) = \mathit{FIRST}^{\PE{G}}(p_1) \cup \mathit{FIRST}^{\PE{G}}(p_2)$.

Now that we have a definition for LL(1) PE-CFGs without $\varepsilon$ expressions, we can show that these grammars can be interpreted as PEGs (by the relation $\matchpeg$) without changing their language. The assertion that the language of an LL(1) grammar is the same whether interpreted as a CFG or as a PEG is a corollary of the following lemma:

\begin{lemma} Given an LL(1) PE-CFG $G$ without $\varepsilon$ expressions, $G[p] \, xy \matchg y$ if and only if $G[p] \, xy \matchpeg y$.
\end{lemma}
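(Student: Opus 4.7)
The plan is to prove the two directions separately. The $(\Leftarrow)$ direction is immediate from Lemma~\ref{cfgd:subset}: any proof tree under $\matchpeg$ lifts to one under $\matchg$, with no use of the LL(1) hypothesis. All the work is in the $(\Rightarrow)$ direction, which I would prove by induction on the height of the proof tree for $G[p] \, xy \matchg y$.

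Most cases are routine. For a tree ending in \textbf{char.1} the corresponding $\matchpeg$ rule has an identical conclusion; \textbf{empty.1} does not arise since $G$ has no $\varepsilon$ expressions; \textbf{var.1} and \textbf{con.1} both have a direct twin under $\matchpeg$, so the induction hypothesis applied to the premises can be reassembled with the matching $\matchpeg$ rule (for \textbf{con.1} the premises already threaten no failure, so no side condition appears). A tree ending in \textbf{choice.1} lifts via \textbf{ord.1} after one application of the induction hypothesis.

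The interesting case is \textbf{choice.2}: the hypothesis is $G[p_2] \, xy \matchg y$, and I need to conclude $G[\Choice{p_1}{p_2}] \, xy \matchpeg y$ via rule \textbf{ord.2}. The induction hypothesis supplies $G[p_2] \, xy \matchpeg y$, but \textbf{ord.2} also requires $G[p_1] \, xy \matchpeg \Nothing$. To produce this second premise I would argue in two steps. First, completeness: LL(1) grammars without $\varepsilon$ expressions and without useless symbols cannot be left-recursive, because any production of the form $A \rightarrow A\alpha$ would force $\mathit{FIRST}^G(A\alpha) \subseteq \mathit{FIRST}^G(A)$ to overlap every other alternative of $A$, so by Ford's theorem the expression $p_1$ has some outcome under $\matchpeg$ on input $xy$. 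Second, I rule out a success outcome: if $G[p_1] \, xy \matchpeg y'$ with $y' \neq \Nothing$, then Lemma~\ref{cfgd:subset} yields $G[p_1] \, xy \matchg y'$, and since $G$ has no $\varepsilon$ expressions (hence no expression matches the empty prefix) the consumed prefix is non-empty and begins with the first terminal $a$ of $x$; but $G[p_2] \, xy \matchg y$ also puts $a$ into $\mathit{FIRST}^G(p_2)$, so $a \in \mathit{FIRST}^G(p_1) \cap \mathit{FIRST}^G(p_2)$, contradicting the LL(1) disjointness hypothesis. Thus $G[p_1] \, xy \matchpeg \Nothing$, and rule \textbf{ord.2} delivers the desired conclusion.

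The main obstacle I anticipate is the completeness step: the chain ``LL(1) without $\varepsilon$ $\Rightarrow$ no left recursion $\Rightarrow$ evaluation under $\matchpeg$ terminates'' is informal as I have sketched it, and care is needed to invoke (or to restate) Ford's completeness result in our natural-semantics setting. If appealing to Ford proves inconvenient, I would instead prove a separate lemma by structural induction on $p$ that every expression in an LL(1) PE-CFG without $\varepsilon$ has some outcome under $\matchpeg$ on every input, using the disjoint-FIRST condition to show that at each non-terminal exactly one alternative can succeed and the rest fail.
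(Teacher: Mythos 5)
Your proposal is correct and follows essentially the same route as the paper: induction on the height of the $\matchg$ proof tree, with the \textbf{choice.2} case handled by combining the disjoint-\emph{FIRST} condition and the no-$\varepsilon$ hypothesis with the contrapositive of Lemma~\ref{cfgd:subset} and completeness of LL(1) grammars (which the paper likewise justifies only by citing that LL(1) grammars cannot be left-recursive). Your rearrangement of that case as a proof by contradiction, and your worry about formalizing completeness, do not change the substance of the argument.
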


\begin{proof}($\Rightarrow$) By induction on the height of the proof tree for $G[p] \, xy \matchg y$. The interesting case is {\bf choice.2}. For this case, we have $G[p_2] \, xy \matchg y$. As $G$ does not have $\varepsilon$ expressions, $x$ cannot be empty. Let $a$ be the first symbol of $x$. It is obvious that $a \in \mathit{FIRST}^G(p_2)$, so $a \notin \mathit{FIRST}^G(p_1)$ by the LL(1) property. So $G[p_1] \, xy \not\matchg w$, and, by denial of the consequent of Lemma~\ref{cfgd:subset}, $G[p_1] \, xy \not\matchpeg w$. LL(1) grammars cannot have left recursion~\cite{parsing}, so they are complete and  $G[p_1] \, xy \not\matchpeg w$ implies $G[p_1] \, xy \matchpeg \Nothing$. With the induction hypothesis and the application of {\bf ord.2} we have $G[\Choice{p_1}{p_2}] \, xy \matchpeg y$.

($\Leftarrow$) Just a special case of Lemma~\ref{cfgd:subset}.
\end{proof}

We will now show that a correspondence between LL(1) grammars and their corresponding PEGs still exists when we allow $\varepsilon$ expressions, as long as the LL(1) grammars have BNF structure. Grammars with $\varepsilon$ expressions can have $\varepsilon$ in the {\em FIRST} sets of their expressions, so we need a slightly different definition of {\em FIRST}:
\[
\mathit{FIRST}^{G}(p) = \{ a \in T \, | \, G[p] \, axy \matchg y \} \cup \mathit{nullable(p)}
\]
\[
\mathit{nullable(p)} = \left\{ \begin{array}{ll}
  \{ \varepsilon \} & \mbox{if $G[p] \, x \matchg x$} \\
  \emptyset & \mbox{otherwise}
  \end{array} \right.
\]

The LL(1) property for grammars with $\varepsilon$ expressions also uses a {\em FOLLOW} set, defined below:
\[
\begin{array}{l}
\mathit{FOLLOW}^{G}(A) = \{ a \in T \cup \{ \$ \} \, | \, \mbox{$G[A] \, yaz \matchg az$ is in a}\\
\tenspaces \tenspaces \tenspaces \tenspaces \tenspaces \,\,\,\mbox{proof tree for $G \, w\$ \matchg \$ $} \}
\end{array}
\]

Like with the {\em FIRST} set, it is straightforward to prove that our definition of {\em FOLLOW} is equivalent to the definition for traditional CFGs. The restriction involving the proof tree for $G \, w\$ \matchg \$ $ of our definition proceeds directly from the fact that the traditional definition only uses derivations starting from the initial symbol of the grammar, and CFG derivations correspond to PE-CFGs proof trees.

The general statement $G \, xy \matchg y \Rightarrow G \, xy \matchpeg y$ that we proved true for LL(1) PE-CFGs without $\varepsilon$ expressions is false for grammars with $\varepsilon$ expressions, as the following simple grammar shows:
\[
S \rightarrow \Choice{a}{\varepsilon}
\]

This grammar is LL(1), and we have $G \, a \matchg a$ through rule {\bf choice.2}, but $G \, a \not\matchpeg a$, although simple inspection shows that the language of $G$ is $\{ a, \varepsilon \}$ whether interpreted as a CFG or as a PEG. 

We solve the above problem by introducing an end-of-input marker \$ ($\$ \notin T$), and using this marker to constrain proof trees so we only consider trees that consume the input and leave just the marker. Instead of trying to prove that $G \, xy \matchg y \Rightarrow G \, xy \matchpeg y$, we will prove that $G \, x\$ \matchg \$ \Rightarrow G \, x\$ \matchpeg \$ $, which will still be enough to prove that $G$ has the same language either interpreted as a PE-CFG or as a PEG.

A PE-CFG $G$ with $\varepsilon$ expressions is LL(1) if and only if the following two restrictions hold for every production $A \rightarrow p$ of $G$ and every choice $\Choice{p_1}{p_2}$ of $p$:
\begin{enumerate}
\item $\mathit{FIRST}^G(p_1) \cap \mathit{FIRST}^G(p_2) = \emptyset$
\item $\mathit{FIRST}^G(p_1) \cap \mathit{FOLLOW}^G(A) = \emptyset$ if $\varepsilon \in \mathit{FIRST}^G(p_2)$
\end{enumerate}

This is a direct restatement of the LL(1) restrictions for traditional CFGs~\cite{knuth,gtdpl}, and it is straightforward to show that a CFG $G$ is LL(1) if and only if its corresponding PE-CFG $G^\prime$ is LL(1).

We can now show that an LL(1) PE-CFG $G$ has the same language whether interpreted as a CFG or as a PEG. The proof is a corollary of the following lemma:

\begin{lemma}
\label{lemma:ll1peg1}
Given an LL(1) PE-CFG $G$, if there is a proof tree for $G \, x\$ \matchg \$ $ then, for every subtree $G[p] \, x^\prime\$ \matchg x^{\prime\prime}\$ $, we have that $G[p] \, x^\prime \$ \matchpeg x^{\prime\prime}\$ $.
\end{lemma}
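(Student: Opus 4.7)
The plan is to fix the proof tree for $G \, x\$ \matchg \$$ (call it $T$) and induct on the height of each subtree of $T$ with conclusion $G[p] \, x^\prime\$ \matchg x^{\prime\prime}\$$, showing in each case the corresponding PEG judgement holds. The easy cases are \textbf{empty.1} and \textbf{char.1} (the rules are identical in $\matchpeg$), and the inductive cases \textbf{con.1}, \textbf{var.1}, and \textbf{choice.1}, where the induction hypothesis on the antecedent subtrees combines with the correspondingly named rule in $\matchpeg$ (using \textbf{ord.1} for \textbf{choice.1}).

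All the real work is in case \textbf{choice.2}, where the relevant subtree concludes $G[\Choice{p_1}{p_2}]\, x^\prime\$ \matchg x^{\prime\prime}\$$ from $G[p_2]\, x^\prime\$ \matchg x^{\prime\prime}\$$. The induction hypothesis gives $G[p_2]\, x^\prime\$ \matchpeg x^{\prime\prime}\$$, so to apply \textbf{ord.2} I need $G[p_1]\, x^\prime\$ \matchpeg \Nothing$. Because LL(1) grammars have no left recursion and are therefore complete, either $G[p_1]\, x^\prime\$ \matchpeg y$ for some $y$ or $G[p_1]\, x^\prime\$ \matchpeg \Nothing$; and by the contrapositive of Lemma~\ref{cfgd:subset}, it suffices to rule out $G[p_1]\, x^\prime\$ \matchg y$ for every $y$. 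To reach the $\mathit{FOLLOW}$ set I invoke BNF structure Property 1: the choice $\Choice{p_1}{p_2}$ occurs inside $P(A)$ for a unique non-terminal $A$, nested only within other choices. In $T$, the subtree for our choice is reached by threading through a chain of outer \textbf{choice.1}/\textbf{choice.2} applications and finally a \textbf{var.1} application for $A$, all of which preserve the suffix, so $A$ matches the same prefix as our choice and the first character of $x^{\prime\prime}\$$ belongs to $\mathit{FOLLOW}^G(A)$.

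I then split on whether $p_2$ consumed any input. If $x^\prime = az$ with $a$ the first character of $x^\prime$, then $a \in \mathit{FIRST}^G(p_2)$ and LL(1) restriction 1 gives $a \notin \mathit{FIRST}^G(p_1)$; also $p_1$ cannot be nullable, for BNF Property 3 would then force $p_2$ nullable too, placing $\varepsilon$ in $\mathit{FIRST}^G(p_1) \cap \mathit{FIRST}^G(p_2)$ and violating LL(1). Hence $p_1$ cannot match any prefix of $x^\prime\$$. If instead $p_2$ matched empty, then $x^\prime = x^{\prime\prime}$ and $\varepsilon \in \mathit{FIRST}^G(p_2)$, so LL(1) restriction 2 gives $\mathit{FIRST}^G(p_1) \cap \mathit{FOLLOW}^G(A) = \emptyset$; the first symbol of $x^\prime\$$ lies in $\mathit{FOLLOW}^G(A)$ by the preceding paragraph, so $p_1$ cannot begin a non-empty match, and the same nullability argument rules out an empty match. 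Either way $p_1$ has no CFG-match on $x^\prime\$$, which through completeness and Lemma~\ref{cfgd:subset} delivers $G[p_1]\, x^\prime\$ \matchpeg \Nothing$, closing the case via \textbf{ord.2}.

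The main obstacle is the \textbf{choice.2} case when $p_2$ matches empty: I must argue that the first symbol of $x^\prime\$$ really lies in $\mathit{FOLLOW}^G(A)$, which relies on BNF Property 1 to anchor the choice inside $P(A)$ and on carrying enough of the ambient tree $T$ through the induction to expose the enclosing \textbf{var.1} application for $A$ as witness for the $\mathit{FOLLOW}$ membership.
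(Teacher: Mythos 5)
Your proof is correct and follows essentially the same route as the paper's: induction on the height of the subtree, with all the work in \textbf{choice.2}, using BNF structure to anchor the choice to a non-terminal $A$ and obtain the $\mathit{FOLLOW}$ membership, splitting on whether the match is empty, and combining the LL(1) restrictions with completeness and the contrapositive of Lemma~\ref{cfgd:subset} to derive $G[p_1]\,x^\prime\$\matchpeg\Nothing$ before applying \textbf{ord.2}. You additionally spell out, via BNF Property 3, why $p_1$ cannot be nullable — a detail the paper's proof elides when it jumps from ``$a \notin \mathit{FIRST}^G(p_1)$'' to ``$G[p_1]\,x^{\prime\prime}\$\not\matchg y$'' — which is a welcome refinement rather than a deviation.
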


\begin{proof} By induction on the height of the proof tree for $G[p] \, x^\prime \$ \matchg x^{\prime\prime}\$ $. The interesting case is {\bf choice.2}. For this case, we have $G[p_2] \, x^\prime \$ \matchg x^{\prime\prime}\$ $. Because of BNF structure, this is a subtree of $G[A] \, x^\prime \$ \matchg x^{\prime\prime}\$ $ for some non-terminal $A$. By the definition of {\em FOLLOW}, the first symbol $a$ of $x^{\prime\prime}\$ $ is in $\mathit{FOLLOW}^{G}(A)$. We now have two subcases, one where $x^\prime = x^{\prime\prime}$ and another where $x^\prime = bwx^{\prime\prime}$.

In the first subcase, we have $a \notin \mathit{FIRST}^G(p_1)$ by the second LL(1) restriction. So $G[p_1] \, x^{\prime\prime}\$ \not\matchg y$ and, by denial of the consequent of Lemma~\ref{cfgd:subset} and completeness of LL(1) grammars, $G[p_1] \, x^{\prime\prime}\$ \matchpeg \Nothing$. With the induction hypothesis and the application of {\bf ord.2} we have $G[\Choice{p_1}{p_2}] \, x^{\prime\prime}\$ \matchpeg x^{\prime\prime}\$$.

In the second subcase, where $x^\prime = bwx^{\prime\prime}$, we have $b \in \mathit{FIRST}^G(p_2)$, so $b \notin \mathit{FIRST}^G(p_1)$ by the first LL(1) restriction. The rest of the proof is similar to the first subcase.
\end{proof}

The proof that $G \, x\$ \matchg \$ $ if and only if $G \, x\$ \matchpeg \$ $ for any LL(1) PE-CFG $G$ is now trivial, from the above lemma and from Lemma~\ref{cfgd:subset}.

In the next section we will show how any strong-LL($k$) grammar can be translated to a PEG that recognizes the same language, while keeping the overall structure of the grammar.

\section{Strong-LL($k$) Grammars and PEGs}
\label{sec:llkpeg}

Strong-LL($k$) grammars are a subset of CFGs where a top-down parser can predict which production to use for a non-terminal just by examining the next $k$ symbols of the input, where $k$ is arbitrary but fixed for each grammar. They are a special case of LL($k$) grammars, in which the parser can use both the next $k$ symbols of the input and the history of which productions it already picked during parsing.

Unlike LL(1) grammars, there are strong-LL($k$) grammars that have different languages when interpreted as CFGs and as PEGs, no matter how we order their choice expressions. For example, take the PE-CFG $G$ with the following productions:
\[
S \rightarrow \Choice{A}{B} \tenspaces
A \rightarrow \Choice{ab}{C} \tenspaces
B \rightarrow \Choice{a}{Cd} \tenspaces
C \rightarrow c
\]

$G$ is a strong-LL(2) grammar, and its language, when interpreted as a CFG, is $\{ a, ab, c, cd \}$. But interpreting $G$ as a PEG yields the language $\{ a, ab, c \}$; when matching $cd$, non-terminal $A$ succeeds (through its second alternative, non-terminal $C$), and non-terminal $B$ (the second alternative of $S$) is never tried. Changing $S$ to $S \rightarrow \Choice{B}{A}$ changes the PEG's language to $\{ a, c, cd \}$, which is still different from the language of $G$ as a CFG, because what happened to $cd$ now happens to $ab$.

Nevertheless, any strong-LL($k$) language can be parsed by a top-down parser without backtracking while using $k$ symbols of lookahead. So it seems intuitive that we can use syntactic predicates to direct a PEG parser to the right alternative. We cannot interpret $G$ as a PEG and recognize the same language, but we can add predicates to $G$, to emulate the predictions that a strong-LL($k$) parser makes.

An approach for translating a PE-CFG $G$ to a PEG that recognizes the same language is to add an and-predicate (syntactical sugar for a double application of the not-predicate) in front of every alternative of a non-terminal; this and-predicate tests the next $k$ symbols of the input against the possible lookahead values that a strong-LL($k$) parser would use for that alternative. For the strong-LL(2) grammar above, the translation results in the following PEG:
\begin{eqnarray*}
S & \rightarrow & \&(\Choice{ab}{c\$})\,A\ \ \ |\ \ \ \&(\Choice{a\$}{cd})\,B \tenspaces \fivespaces \\
A & \rightarrow & \&(ab)\,ab\ \ \ |\ \ \ \&(c\$)\,C \tenspaces \fivespaces \\
B & \rightarrow & \&(a\$)\,a\ \ \ |\ \ \ \&(cd)\,Cd \tenspaces \fivespaces \\
C & \rightarrow & \&(\Choice{cd}{c\$})\,c
\end{eqnarray*}

It is easy to check that this PEG recognizes the language $\{ a, ab, c, cd \}$, the same as $G$, if we include the marker \$ at the end of the input strings for the PEG. The formal definition of the translation does not add the predicate to the last alternative of a non-terminal (or to the sole alternative, in case of non-terminal $C$ above).

Before formalizing our translation and proving its correctness, we will give definitions of strong-LL($k$) properties using our new CFG formalism. First we need an auxiliary function $\mathit{take}_k$, with the definition below:
\begin{eqnarray*}
\mathit{take}_k(\varepsilon) & = & \varepsilon \\
\mathit{take}_k(a_1 \ldots a_n) & = & \left \{ \begin{array}{ll}
  a_1 \ldots a_k & \mbox{if $n > k$} \\
  a_1 \ldots a_n & \mbox{otherwise}
  \end{array}
  \right .
\end{eqnarray*}

We will say that $take_k(x)$ is the {\em $\mathrm{k}$-prefix} of $x$. We also need to define $\bullet_k$, a language concatenation operation that results in $k$-prefixes (i.e. concatenates each string of the first language with each string of the second language, taking the $k$-prefix of each result):
\[
X \bullet_k Y = \{ \mathit{take}_k(x) \,\, | \,\,  x \in X \cdot Y \}
\]

A property of $k$-prefixes is that the $k$-prefix of the concatenation of two strings is also the $k$-prefix of the concatenation of their $k$-prefixes (proof by case analysis on the definition of $\mathit{take}_k$):
\[
\mathit{take}_k(xy) = \mathit{take}_k(\mathit{take}_k(x)\mathit{take}_k(y))
\]

This leads directly to the following simple lemma, which we only include to reference in later proofs:

\begin{lemma}
\label{lemma:take}
If $\mathit{take}_k(x) \in X$ and $\mathit{take_k}(y) \in Y$ then we have $\mathit{take}_k(xy) \in X \bullet_k Y$.
\end{lemma}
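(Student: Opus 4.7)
The plan is to unwind the definitions and apply the $k$-prefix identity stated just before the lemma. Let $x' = \mathit{take}_k(x)$ and $y' = \mathit{take}_k(y)$. By hypothesis, $x' \in X$ and $y' \in Y$, so their ordinary concatenation $x'y'$ lies in the language concatenation $X \cdot Y$. By the definition of $\bullet_k$ (namely $X \bullet_k Y = \{\mathit{take}_k(z) \mid z \in X \cdot Y\}$), we immediately get $\mathit{take}_k(x'y') \in X \bullet_k Y$.

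It remains to identify $\mathit{take}_k(xy)$ with $\mathit{take}_k(x'y')$. This is exactly the $k$-prefix identity displayed just above the lemma statement, $\mathit{take}_k(xy) = \mathit{take}_k(\mathit{take}_k(x)\mathit{take}_k(y))$, which rewrites the right-hand side as $\mathit{take}_k(x'y')$. Combining this equality with the membership derived above yields $\mathit{take}_k(xy) \in X \bullet_k Y$, as required.

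There is essentially no obstacle here: the lemma is a one-line composition of the $k$-prefix identity with the definition of $\bullet_k$. The only thing to be careful about is not conflating the metavariables $x, y$ in the lemma statement with the dummy variables used in the definition of $\bullet_k$, which is why I would introduce the explicit abbreviations $x'$ and $y'$ in the write-up.
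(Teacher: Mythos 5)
Your proof is correct and is exactly the argument the paper intends: the paper's own proof is just ``Trivial,'' with the preceding text noting that the $k$-prefix identity $\mathit{take}_k(xy) = \mathit{take}_k(\mathit{take}_k(x)\mathit{take}_k(y))$ ``leads directly'' to the lemma. Your write-up simply makes that one-line composition with the definition of $\bullet_k$ explicit.
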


\begin{proof}
Trivial.
\end{proof}

We can now define the $\mathit{FIRST}_k$ sets, the strong-LL($k$) analog of the LL(1) {\em FIRST} sets. The $\mathit{FIRST}_k$ set of an expression $p$ is the set of the $k$-prefixes of every string that $p$ matches: 
\[
\mathit{FIRST}^G_k(p) = \{ \mathit{take}_k(x) \,\, | \,\, G[p] \, xy \matchg y \}
\]

The definition of $\mathit{FOLLOW}_k$ sets is also a straightforward extension of the definition of {\em FOLLOW} sets for LL(1) grammars:
\[
\begin{array}{l}
\mathit{FOLLOW}^{G}_k(A) = \{ \mathit{take}_k(y) \, | \, \mbox{$G[A] \, xy \matchg y$ is in a }\\
\tenspaces \tenspaces \tenspaces \tenspaces \fivespaces \,\,\mbox{proof tree for $G \, w\$^k \matchg \$^k $} \}
\end{array}
\]

To ensure that all members of $\mathit{FOLLOW}_k$ have length $k$, we use $k$ end-of-input markers $\$ \notin T$ instead of the single marker we used with LL(1) grammars. The semantics of $\matchg$ guarantee that $\$^k$ is a suffix of $y$, so the length of $y$ is at least $k$ and the length of $\mathit{take}_k(y)$ is always $k$.

We can now state the strong-LL($k$) property: a PE-CFG $G$ with BNF structure is strong-LL($k$) if and only if every choice expression $\Choice{p_1}{p_2}$ of every production $A \rightarrow p$ satisfies the following condition:
\[
\begin{array}{l}
(\mathit{FIRST}^G_k(p_1) \bullet_k \mathit{FOLLOW}^G_k(A)) \,\cap \\
 (\mathit{FIRST}^G_k(p_2) \bullet_k \mathit{FOLLOW}^G_k(A)) = \emptyset
\end{array}
\]

The strong-LL($k$) property is just a formal way of saying that the next $k$ symbols of the input are enough to choose among the choice expressions of a given non-terminal.

We also need an auxiliary function {\em choice} that takes a set of strings and makes a choice expression with each string as an alternative of this choice:
\begin{eqnarray*}
\mathit{choice}(\emptyset) = \varepsilon \\
\mathit{choice}(\{ p_1, \ldots, p_n \}) & = & \Choice{p_1}{\Choice{\ldots}{p_n}} \end{eqnarray*}

We will use {\em choice} to transform a lookahead set into a lookahead expression. Our translation inserts lookahead expressions to direct the PEG parser to the correct alternative in a choice, so we only need to changes choice operations. Because we are assuming that our PE-CFGs have BNF structure, these choice operations are at the ``top-level'' of each production. Intuitively, if $\Choice{p_1}{p_2}$ is a choice of non-terminal $A$, $\varphi^G_k(\Choice{p_1}{p_2}, A)$ adds the lookahead expression $\mathcal{L}^G(p_1, A)$ to $p_1$ and recursively transforms $p_2$; any expression that is not a choice is not transformed:
\begin{eqnarray*}
\varphi^G_k(\Choice{p_1}{p_2}, A) & = & \Choice{\mathcal{L}^G(p_1, A)p_1\,}{\,\varphi^G_k(p_2, A)} \\
\varphi^G_k(\varepsilon, A) & = & \varepsilon \\
\varphi^G_k(a, A) & = & a \\
\varphi^G_k(p_1p_2, A) & = & p_1p_2 \\
\varphi^G_k(B, A) & = & B \\
\mbox{where} \ \mathcal{L}^G(p, A) & = & \&\mathit{choice}(\mathit{FIRST}^G_k(p) \bullet_k \\
& & \tenspaces \fivespaces \mathit{FOLLOW}^G_k(A))
\end{eqnarray*}

The definition of our translation now is straightforward. From a strong-LL($k$) grammar $G$ with BNF structure we can generate a PEG $\Phi_b(G)$ (the {\em before} LL($k$)-PEG of $G$) by replacing each production $A \rightarrow p$ with $A \rightarrow \varphi^G_k(p, A)$.

To prove the correctness of the translation, we will use the same approach that we took in the proof for LL(1) grammars with $\varepsilon$ expressions. We will prove that in any derivation of $G \, x\$^k \matchg \$^k$ all of the subparts of the derivation have correspondents in $\Phi_b(G)$ via function $\varphi^G_k$. One subtlety of the proof is the parameter $A$ of $\varphi^G_k$; our definition of $\Phi_b(G)$ makes it clear that $A$ in $\varphi^G_k(p, A)$ is the non-terminal that ``owns'' the expression $p$. For an expression $p$ that appears in a subpart of the derivation of $G \, x\$^k \matchg \$^k$ as $G[p]$, $A$ is the first non-terminal that appears as $G[A]$ in a path from this subpart to the conclusion $G \, x\$^k \matchg \$^k$. Formally, we can state the following lemma, a version of Lemma~\ref{lemma:ll1peg1}:

\begin{lemma}
\label{lemma:llkpeg1}
Given a strong-LL($k$) PE-CFG $G$, if there is a proof tree for $G \, x\$^k \matchg \$^k $ then, for every subtree $G[p] \, x^\prime\$^k \matchg x^{\prime\prime}\$^k $ of this proof tree, we have $\Phi_b(G)[\varphi^G_k(p, A)] \, x^\prime \$^k \matchpeg x^{\prime\prime}\$^k $, where $A$ is the first non-terminal that appears as $G[A]$ in a path from the conclusion $G[p] \, x^\prime\$^k \matchg x^{\prime\prime}\$^k $ of the subtree to the conclusion $G \, x\$^k \matchg \$^k $ of the whole tree.
\end{lemma}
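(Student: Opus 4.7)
The plan is to prove this by induction on the height of the CFG proof tree for $G[p] \, x'\$^k \matchg x''\$^k$, closely following the structure of Lemma~\ref{lemma:ll1peg1}. The base cases \textbf{empty.1} and \textbf{char.1} are immediate, since $\varphi^G_k$ fixes $\varepsilon$ and terminals and the corresponding PEG inference rules are identical to the CFG ones. The inductive cases \textbf{con.1} and \textbf{var.1} are routine once we keep track of the parameter $A$: under \textbf{con.1} the expression $p_1 p_2$ is not a choice (by BNF structure), $\varphi^G_k$ leaves it unchanged, and both antecedents inherit the same $A$; under \textbf{var.1} the subtree concludes $G[B] \, x'\$^k \matchg x''\$^k$ from $G[P(B)] \, x'\$^k \matchg x''\$^k$, and the new ``first non-terminal on the path'' for the antecedent becomes $B$ itself, so the induction hypothesis yields $\Phi_b(G)[\varphi^G_k(P(B), B)] \, x'\$^k \matchpeg x''\$^k$; since this expression is precisely the production of $B$ in $\Phi_b(G)$, the PEG's \textbf{var.1} gives $\Phi_b(G)[B] \, x'\$^k \matchpeg x''\$^k$, which equals $\Phi_b(G)[\varphi^G_k(B, A)]$ by the definition of $\varphi^G_k$.

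The substantive work is in the two choice cases. The unifying observation is that whenever a subtree $G[\Choice{p_1}{p_2}] \, x'\$^k \matchg x''\$^k$ occurs, BNF structure forces it to sit inside $G[A] \, x'\$^k \matchg x''\$^k$ for the tracked $A$, so $\mathit{take}_k(x''\$^k) \in \mathit{FOLLOW}^G_k(A)$. If the last rule used was \textbf{choice.$i$}, split $x'$ as $zx''$ with $G[p_i] \, zx''\$^k \matchg x''\$^k$; then $\mathit{take}_k(z) \in \mathit{FIRST}^G_k(p_i)$, and Lemma~\ref{lemma:take} gives $\mathit{take}_k(x'\$^k) \in \mathit{FIRST}^G_k(p_i) \bullet_k \mathit{FOLLOW}^G_k(A)$. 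For case \textbf{choice.1}, this says the and-predicate $\mathcal{L}^G(p_1, A)$ succeeds on $x'\$^k$; combining it with the induction hypothesis on $p_1$ via \textbf{con.1} and then applying \textbf{ord.1} closes the case.

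The main obstacle is case \textbf{choice.2}, where we must establish that the first PEG alternative $\mathcal{L}^G(p_1, A)\,p_1$ fails so that \textbf{ord.2} becomes applicable. By the observation above, $\mathit{take}_k(x'\$^k) \in \mathit{FIRST}^G_k(p_2) \bullet_k \mathit{FOLLOW}^G_k(A)$, and the strong-LL($k$) property forces $\mathit{take}_k(x'\$^k) \notin \mathit{FIRST}^G_k(p_1) \bullet_k \mathit{FOLLOW}^G_k(A)$. Every element of the latter set has length exactly $k$, since each element of $\mathit{FOLLOW}^G_k(A)$ has length $k$ and $\bullet_k$ truncates concatenations to length $k$; hence no element of the set is a prefix of $x'\$^k$, and the inner choice inside $\mathcal{L}^G(p_1, A)$---a choice of terminal strings of length $k$, hence complete on any input---fails, making the and-predicate fail, and by \textbf{con.2} the whole first alternative fails. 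Applying the induction hypothesis to the subtree for $p_2$ yields $\Phi_b(G)[\varphi^G_k(p_2, A)] \, x'\$^k \matchpeg x''\$^k$, and \textbf{ord.2} closes the case.
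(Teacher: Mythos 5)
Your proof is correct and follows essentially the same route as the paper's: induction on the height of the CFG proof tree, with the choice cases resolved by placing the choice inside $G[A]$ via BNF structure, using Lemma~\ref{lemma:take} to locate $\mathit{take}_k(x^\prime\$^k)$ in (or, by the strong-LL($k$) property, outside) $\mathit{FIRST}^G_k(p_1) \bullet_k \mathit{FOLLOW}^G_k(A)$, and then discharging the lookahead predicate before applying {\bf ord.1} or {\bf ord.2}. You are somewhat more explicit than the paper in two places the paper leaves implicit --- the bookkeeping of the parameter $A$ through the {\bf var.1} case, and the argument that the lookahead choice actually fails because all its alternatives have length exactly $k$ --- but these are elaborations, not a different approach.
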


\begin{proof} By induction on the height of the proof tree for $G[p] \, x^\prime \$^k \matchg x^{\prime\prime}\$^k $. The interesting cases are {\bf choice.1} and {\bf choice.2}. For case {\bf choice.1}, we have $G[p_1] \, x^\prime \$^k \matchg x^{\prime\prime}\$^k $. Because of BNF structure, this is a subtree of $G[A] \, x^\prime \$^k \matchg x^{\prime\prime}\$^k $, and we have $\mathit{take}_k(x^{\prime\prime}\$^k) \in \mathit{FOLLOW}^G_k(A)$ by the definition of $\mathit{FOLLOW}_k$. If we combine this with the Lemma~\ref{lemma:take} we have $\mathit{take}_k(x^\prime\$^k) \in \mathit{FIRST}^G_k(p_1) \bullet_k \mathit{FOLLOW}^G_k(A)$, because the $k$-prefix of what $p_1$ matches is in $\mathit{FIRST}^G_k(p_1)$. So  $\Phi_b(G)[\mathcal{L}^G(p_1, A)] \, x^\prime\$^k \matchpeg x^\prime\$^k $ by the definition of $\mathcal{L}^G$. By the induction hypothesis, $\Phi_b(G)[p_1] \, x^\prime\$^k \matchpeg x^{\prime\prime}\$^k$, and with applications of rules {\bf con.1} and {\bf ord.1} we have $\Phi_b(G)[\varphi^G_k(\Choice{p_1}{p_2}, A)] \, x^\prime \$^k \matchpeg x^{\prime\prime}\$^k $.

For case {\bf choice.2}, we can use the LL($k$) property and an argument similar to the one used in {\bf choice.1} to conclude that $\mathit{take}_k(x^\prime\$^k) \notin \mathit{FIRST}^G_k(p_1) \bullet_k \mathit{FOLLOW}^G_k(A)$. So, by the definition of $\mathcal{L}^G$, $\Phi_b(G)[\mathcal{L}^G(p_1, A)] \, x^\prime\$^k \matchpeg \Nothing$. By the induction hypothesis, we have $\Phi_b(G)[\varphi^G_k(p_2, A)] \, x^\prime\$^k \matchpeg x^{\prime\prime}\$^k$, and by rules {\bf con.2} and {\bf ord.2} we have $\Phi_b(G)[\varphi^G_k(\Choice{p_1}{p_2}, A)] \, x^\prime \$^k \matchpeg x^{\prime\prime}\$^k $.
\end{proof}

We also need to prove that for any strong-LL($k$) PE-CFG $G$ we have $G \, x\$^k \matchg \$^k $ if $\Phi_b(G) \, x\$^k \matchpeg \$^k $. Intuitively, if $(\&p_1)p_2$ matches a string $x$ then $p_2$ also matches $x$, so if we have a proof tree for $\Phi_b(G) \, x\$^k \matchpeg \$^k $ we will be able to erase all the predicates introduced by $\Phi_b$ and build a proof tree for $G \, x\$^k \matchg \$^k $. 

First, let us define predicate erasure as follows: the erasure of $(!p_1)p_2$ is the erasure of $p_2$. Any predicate occurring alone is replaced by $\varepsilon$. All other expressions just recursively erase predicates on their subparts. We get the erasure of a grammar by erasing the predicates in the right sides of every production, plus the initial symbol. The purpose of having a special case for the erasure of $(!p_1)p_2$ is to have the erasure of $\Phi_b(G)$ be $G$. Now we can prove the following lemma, which states that removing the predicates of a PEG
$G$ gives us a PE-CFG with a language that is a superset of the language of $G$:

\begin{lemma}
\label{llk:subsetb}
Given a PEG $G$ and an expression $p$, and the PE-CFG $G^\prime$ and expression $p^\prime$ obtained by erasing all predicates of $G$ and $p$, if $G[p] \, xy \matchpeg y$ then $G^\prime[p^\prime] \, xy \matchg y$.
\end{lemma}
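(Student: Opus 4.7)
The plan is to prove the lemma by induction on the height of the proof tree for $G[p] \, xy \matchpeg y$, analyzing each PEG rule that can conclude a successful match and showing that, after erasing predicates, we can build a $\matchg$ proof tree of the erased expression on the same input. Note that we only ever need to recurse into subtrees whose conclusion is also a successful match, so the induction hypothesis is sound for all appeals we will need to make.

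The routine cases are as follows. For \textbf{empty.1}, \textbf{char.1}, and \textbf{var.1} the erasure leaves the expression's shape unchanged, so we just invoke the corresponding CFG rules (using the induction hypothesis for \textbf{var.1}). For \textbf{ord.1} the premise gives $G[p_1] \, xy \matchpeg y$; the induction hypothesis yields $G'[p_1'] \, xy \matchg y$ and the CFG rule \textbf{choice.1} finishes the case since the erasure of $\Choice{p_1}{p_2}$ is $\Choice{p_1'}{p_2'}$. For \textbf{ord.2} we similarly apply the induction hypothesis to the second premise $G[p_2] \, xy \matchpeg y$ and use \textbf{choice.2}; the failure premise $G[p_1] \, xy \matchpeg \Nothing$ is simply discarded. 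For a standalone not-predicate $!p_0$ matched by \textbf{not.1}, we have $x = \Epsi$; the erasure is $\Epsi$, and rule \textbf{empty.1} of $\matchg$ applies.

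The main obstacle, and the only place where the induction is not mechanical, is the concatenation case. If $p = \Con{p_1}{p_2}$ is matched by \textbf{con.1}, we must split on whether $p_1$ has the form $!p_0$, because the erasure definition is given a special clause for this shape. When $p_1$ is not a standalone predicate, the erasure of $p$ is $\Con{p_1'}{p_2'}$, and applying the induction hypothesis to both premises followed by \textbf{con.1} of $\matchg$ yields the result. When $p_1 = !p_0$, the matched \textbf{not.1} subtree forces $p_1$ to consume no input, so the decomposition in \textbf{con.1} must have the portion consumed by $p_1$ empty; then the second premise is already of the form $G[p_2] \, xy \matchpeg y$, and by the induction hypothesis $G'[p_2'] \, xy \matchg y$. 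Since the erasure of $(!p_0)p_2$ is $p_2'$, this is exactly the statement we need.

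Once these cases are handled, the induction closes and the lemma follows. The only subtleties to watch are that we never appeal to the induction hypothesis on a failure subtree (we don't have to), and that the bookkeeping on how $xy$ decomposes in the concatenation cases matches up with the definition of erasure when the left factor is a predicate.
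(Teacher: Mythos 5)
Your proof is correct and follows the same route as the paper, which simply states ``by induction on the height of the proof tree for $G[p] \, xy \matchpeg y$'' and leaves the case analysis implicit. Your filled-in details --- in particular noting that a successful \textbf{con.1} with a leading predicate forces the predicate to consume nothing, so the special erasure clause for $(!p_1)p_2$ lines up with the induction hypothesis on $p_2$ --- are exactly the bookkeeping the paper's one-line proof elides.
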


\begin{proof} By induction on the height of the proof tree for $G[p] \, xy \matchpeg y$.
\end{proof}

The proof that $\Phi_b(G)$ has the same language as $G$ is now a corollary of Lemmas~\ref{lemma:llkpeg1} and~\ref{llk:subsetb}. This lemma will also be useful in the rest of this section and in the following one, to prove the correctness of our other transformations.

There is another approach for translating a strong-LL($k$) PE-CFG $G$ to an equivalent PEG. This approach uses a subtle consequence of the strong-LL($k$) property: take the alternatives $p_1$ to $p_n$ of a non-terminal $A$. Now let's say that two alternatives $p_i$ and $p_j$ both match prefixes of an input $w$, say $x_i$ and $x_j$, with $x_iy_i = x_jy_j = w$; that is, $G[p_i] \, x_iy_i \matchg y_i$ and $G[p_j] \, x_jy_j \matchg y_j$. By the definition of $\mathit{FIRST}_k$, we have $\mathrm{take}_k(x_i) \in \mathit{FIRST}^G_k(p_i)$ and $\mathrm{take}_k(x_j) \in \mathit{FIRST}^G_k(p_j)$. Therefore we cannot have both $\mathrm{take}_k(y_i) \in \mathit{FOLLOW}^G_k(A)$ and $\mathrm{take}_k(y_j) \in \mathit{FOLLOW}^G_k(A)$, or we would violate the strong-LL($k$) property by having $\mathrm{take}_k(w)$ in both $\mathit{FIRST}^G_k(p_i) \bullet_k \mathit{FOLLOW}^G_k(A)$ and $\mathit{FIRST}^G_k(p_j) \bullet_k \mathit{FOLLOW}^G_k(A)$ (Lemma~\ref{lemma:take}).

The fact that we cannot have the first $k$ symbols of both $y_i$ and $y_j$ in $\mathit{FOLLOW}^G_k(A)$ is the core of this other approach, which is to add a guard {\em after} each alternative of a non-terminal $A$ to test if the next $k$ symbols of the input are in $\mathit{FOLLOW}^G_k(A)$. PEG's local backtracking then guarantees that the wrong alternative will not be taken even if it matches a prefix of the input.

For the strong-LL(2) grammar we used as an example in the beginning of this section, this approach yields the following translated PEG:
\begin{eqnarray*}
S & \rightarrow & A\,\&(\$\$)\ \ |\ \ B\,\&(\$\$) \tenspaces \fivespaces \\
A & \rightarrow & ab\,\&(\$\$)\ \ |\ \ C\,\&(\$\$) \tenspaces \fivespaces \\
B & \rightarrow & a\,\&(\$\$)\ \ |\ \ Cd\,\&(\$\$) \tenspaces \fivespaces \\
C & \rightarrow & c\,\&(\Choice{\$\$}{d\$})
\end{eqnarray*}

It is easy to check that this PEG recognizes the correct language $\{ a, ab, c, cd \}$ if we include the marker \$\$ at the end of the input string.

Like with our first translation, our second translation uses a function $\phi^G_k$ that translates the choice expressions in the production of a non-terminal $A$, adding an and-predicate built from a choice of every string in $\mathit{FOLLOW}^G_k(A)$ to the first half of the choice and recursively translating the second half. As with $\varphi^G_k$, $\phi^G_k$ is the identity function for other kinds of expressions, as the translation only changes choice expressions and we assume BNF structure:
\begin{eqnarray*}
\phi^G_k(\Choice{p_1}{p_2}, A) & = & p_1\&\mathit{choice}(\mathit{FOLLOW}^G_k(A)) \ \  | \\
& & \fivespaces \phi^G_k(p_2, A) \\
\phi^G_k(\varepsilon, A) & = & \varepsilon \\
\phi^G_k(a, A) & = & a \\
\phi^G_k(p_1p_2, A) & = & p_1p_2 \\
\phi^G_k(B, A) & = & B
\end{eqnarray*}

The definition of the second translation is now straightforward. From a strong-LL($k$) grammar $G$ with BNF structure we can generate a PEG $\Phi_a(G)$ (the {\em after} LL($k$)-PEG of $G$) by replacing each production $A \rightarrow p$ with $A \rightarrow \phi^G_k(p, A)$. 

The following lemma is like Lemma~\ref{lemma:llkpeg1} in that it proves that all subparts of a derivation for $G \, x\$^k \matchg \$^k$ have correspondents in $\Phi_a(G)$ via function $\phi^G_k$. As with lemmas~\ref{lemma:llkpeg1} and~\ref{lemma:ll1peg1}, we need to restrict ourselves to matches that consume all input but the end-of-input marker $\$^k$ so we can use the $\mathit{FOLLOW}_k$ sets of the non-terminals in our proof, and by extension the LL($k$) properties of $G$.

\begin{lemma}
\label{lemma:llkpeg2}
Given a strong-LL($k$) PE-CFG $G$, if there is a proof tree for $G \, x\$^k \matchg \$^k $ then, for every subtree $G[p] \, x^\prime\$^k \matchg x^{\prime\prime}\$^k $ of this proof tree, we have $\Phi_a(G)[\phi^G_k(p, A)] \, x^\prime \$^k \matchpeg x^{\prime\prime}\$^k $, where $A$ is the first non-terminal that appears as $G[A]$ in a path from the conclusion $G[p] \, x^\prime\$^k \matchg x^{\prime\prime}\$^k $ of the subtree to the conclusion $G \, x\$^k \matchg \$^k $ of the whole tree.
\end{lemma}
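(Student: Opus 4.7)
The plan is to induct on the height of the proof tree for $G[p] \, x'\$^k \matchg x''\$^k$, following the template of Lemma~\ref{lemma:llkpeg1}. On the non-choice cases $\phi^G_k$ is the identity, and BNF structure guarantees that the immediate subexpressions of a concatenation are never choices, so each of {\bf empty.1}, {\bf char.1}, {\bf con.1}, and {\bf var.1} goes through by invoking the matching $\matchpeg$ rule together with the IH on the strictly smaller subtrees; in the {\bf var.1} case on $G[B]$ the IH is applied to $G[P(B)]$ with $B$ as the new governing non-terminal, yielding $\Phi_a(G)[\phi^G_k(P(B),B)] = \Phi_a(G)[P'(B)]$, as required.

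Case {\bf choice.1}: the subtree $G[p_1] \, x'\$^k \matchg x''\$^k$ sits beneath $G[P(A)]$, which sits beneath $G[A]$ by BNF structure, so the IH gives $\Phi_a(G)[p_1] \, x'\$^k \matchpeg x''\$^k$ (since $\phi^G_k(p_1,A)=p_1$ when $p_1$ is not a choice). Because $G[A] \, x'\$^k \matchg x''\$^k$ appears in the tree for $G \, x\$^k \matchg \$^k$, the definition of $\mathit{FOLLOW}^G_k$ places $\mathit{take}_k(x''\$^k) \in \mathit{FOLLOW}^G_k(A)$, so $\mathit{choice}(\mathit{FOLLOW}^G_k(A))$ matches the first $k$ symbols of $x''\$^k$ and the guarding and-predicate succeeds without consumption. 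Rules {\bf con.1} and {\bf ord.1} then assemble $\Phi_a(G)[\phi^G_k(\Choice{p_1}{p_2}, A)] \, x'\$^k \matchpeg x''\$^k$.

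The main obstacle is case {\bf choice.2}. The IH on $G[p_2] \, x'\$^k \matchg x''\$^k$ delivers $\Phi_a(G)[\phi^G_k(p_2, A)] \, x'\$^k \matchpeg x''\$^k$, but {\bf ord.2} requires showing that the first alternative $p_1\,\&\,\mathit{choice}(\mathit{FOLLOW}^G_k(A))$ fails on $x'\$^k$ under $\Phi_a(G)$. Unlike the ``before'' translation, there is no front-loaded predicate, so $p_1$ may well match some prefix of $x'\$^k$ in $\Phi_a(G)$; I will invoke completeness of $\Phi_a(G)$ (inherited from non-left-recursiveness of strong-LL($k$) grammars, noting that the inserted predicates only recognize finite strings of length $k$) to split into two sub-cases. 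If $\Phi_a(G)[p_1] \, x'\$^k \matchpeg \Nothing$, rule {\bf con.2} yields failure immediately. Otherwise $\Phi_a(G)[p_1] \, x'\$^k \matchpeg x'''\$^k$ for some $x'''$, and Lemma~\ref{llk:subsetb} lifts this to $G[p_1] \, x'\$^k \matchg x'''\$^k$; if $\mathit{take}_k(x'''\$^k)$ were in $\mathit{FOLLOW}^G_k(A)$, then Lemma~\ref{lemma:take} would place $\mathit{take}_k(x'\$^k)$ in $\mathit{FIRST}^G_k(p_1) \bullet_k \mathit{FOLLOW}^G_k(A)$, while the hypothesized $G[p_2]$ match with $\mathit{take}_k(x''\$^k) \in \mathit{FOLLOW}^G_k(A)$ already places it in $\mathit{FIRST}^G_k(p_2) \bullet_k \mathit{FOLLOW}^G_k(A)$, violating the strong-LL($k$) property. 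Hence the lookahead choice fails on $x'''\$^k$, the and-predicate inherits the failure, {\bf con.1} propagates it to the concatenation, and {\bf ord.2} combined with the IH finishes the case.
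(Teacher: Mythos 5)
Your proposal is correct and follows essentially the same route as the paper's own proof: induction on the height of the $\matchg$ proof tree, the \textbf{choice.1} case assembled from the $\mathit{FOLLOW}_k$ membership of $\mathit{take}_k(x''\$^k)$ via \textbf{con.1} and \textbf{ord.1}, and the \textbf{choice.2} case split by completeness into the $\Nothing$ sub-case and the sub-case where $p_1$ matches, which is then refuted through Lemma~\ref{llk:subsetb} and the strong-LL($k$) disjointness condition. The only difference is cosmetic: you spell out inline the $\mathit{FIRST}_k \bullet_k \mathit{FOLLOW}_k$ contradiction that the paper delegates to the discussion preceding the lemma.
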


\begin{proof} By induction on the height of the proof tree for $G[p] \, x^\prime \$^k \matchg x^{\prime\prime}\$^k $. The interesting cases are {\bf choice.1} and {\bf choice.2}. For case {\bf choice.1}, we have $G[p_1] \, x^\prime \$^k \matchg x^{\prime\prime}\$^k $. Because of BNF structure, this is a subtree of $G[A] \, x^\prime \$^k \matchg x^{\prime\prime}\$^k $, and $\mathit{take}_k(x^{\prime\prime}\$^k) \in \mathit{FOLLOW}^G_k$ by the definition of $\mathit{FOLLOW}_k$. It is then easy to see that $\Phi_a(G)[\&\mathit{choice}(\mathit{FOLLOW}^G_k(A))] \, x^{\prime\prime}\$^k \matchpeg x^{\prime\prime}\$^k$. By the induction hypothesis, we have $\Phi_a(G)[p_1] \, x^\prime\$^k \matchpeg x^{\prime\prime}\$^k$, and with applications of {\bf con.1} and {\bf ord.1} we have $\Phi_a(G)[\phi^G_k(\Choice{p_1}{p_2}, A)] \, x^\prime \$^k \matchpeg x^{\prime\prime}\$^k $.

For case {\bf choice.2}, if there is no $w$ so $\Phi_a(G)[p_1] \, x^\prime\$^k \matchpeg w$ then we can conclude $\Phi_a(G)[p_1] \, x^\prime\$^k \matchpeg \Nothing$ by completeness of LL($k$) grammars, and we can apply the induction hypothesis on $p_2$ and rules {\bf con.2} and {\bf ord.2} to get $\Phi_a(G)[\phi^G_k(\Choice{p_1}{p_2}, A)] \, x^\prime \$^k \matchpeg x^{\prime\prime}\$^k $. Now suppose we have $\Phi_a(G)[p_1] \, x^\prime\$^k \matchpeg w$. Expression $p_1$ is predicate-free, so we have $G[p_1] \, x^\prime\$^k \matchg w$ by Lemma~\ref{llk:subsetb}. We have $\mathrm{take}_k(x^{\prime\prime}\$^k) \in \mathit{FOLLOW}^G_k(A)$ by the definition of $\mathit{FOLLOW}_k$, , and we have already seen that means $\mathrm{take}_k(w) \notin \mathit{FOLLOW}^G_k(A)$ or the LL($k$) property is violated. So we have $\Phi_a(G)[\&\mathit{choice}(\mathit{FOLLOW}^G_k(A))] \, w \matchpeg \Nothing$, and we can apply the induction hypothesis on $p_2$ and rules {\bf con.1} and {\bf ord.2} to conclude $\Phi_a(G)[\phi^G_k(\Choice{p_1}{p_2}, A)] \, x^\prime \$^k \matchpeg x^{\prime\prime}\$^k $. 
\end{proof}

The proof that $\Phi_a(G)$ has the same language as $G$ is a corollary of Lemmas~\ref{lemma:llkpeg2} and~\ref{llk:subsetb}.

LL(1) grammars are a special case of LL($k$) grammars where $k = 1$, and every LL(1) grammar is also a strong-LL(1) grammar (and vice-versa)~\cite{gtdpl,parsing}, so the two transformations we presented can also be used for LL(1) grammars, although the lookahead expressions become redundant.

\section{Right-linear and LL-regular Grammars}

A {\em right-linear CFG} is one where the right side of every production has at most one non-terminal, and this non-terminal can only appear as the last symbol of the production. Right-linear CFGs can only define regular languages, and any regular language $R$ has a right-linear CFG $G$; it is straightforward to encode any NFA as a right-linear CFG, and vice-versa~\cite{undecamb}.

For PE-CFGs, we will define a {\em right-linear PE-CFG} as a CFG where the right side of every production is a {\em right-linear parsing expression}. We define right-linear parsing expressions as the following predicate on parsing expressions: $\varepsilon$, $a$ and $A$ are right-linear; $p_1p_2$ is right-linear if and only if $p_1$ is a terminal and $p_2$ is right-linear; $p_1\,|\,p_2$ is right-linear if and only if $p_1$ and $p_2$ are right-linear. It is easy to see that any right-linear CFG has a corresponding right-linear PE-CFG, and vice-versa, as the transformations between CFGs and PE-CFGs we have on Section 2 preserve right-linearity.

Right-linear PE-CFGs, in the general case, do not recognize the same language when interpreted by $\matchg$ and $\matchpeg$. An obvious example is the grammar $S \rightarrow a\,|\,aa$, which is right-linear and has the language $\{ a, aa \}$ under $\matchg$ but $\{ a \}$ under $\matchpeg$. But the simplicity of right-linear grammars lets us prove an equivalence between right-linear CFGs and PEGs by adding a single restriction: the grammar's language must have the {\em prefix property}, that is, there are no distinct strings $x$ and $y$ in the language such that $x$ is a prefix of $y$.

We could prove this equivalence directly, but it is more interesting to combine a few more
general lemmas that individually deal with the relation of the prefix property and PEGs, and
of right-linear grammars and the prefix property. The first lemma gives an equivalence
between a class of PE-CFGs that have a stronger form of the prefix property and PEGs:

\begin{lemma}
\label{lemma:prefix}
Given a PE-CFG $G$ and a parsing expression $p$ where, for any choice $\Choice{q}{r}$
in $G[p]$, $L(G[\Choice{q}{r}])$ has the prefix property, if $G[p] \, xy \matchg y$ 
then $G[p] \, xy \matchpeg y$.
\end{lemma}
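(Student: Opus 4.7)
The plan is to proceed by structural induction on the height of the proof tree for $G[p] \, xy \matchg y$. The hypothesis on choice subexpressions is inherited through the recursion: whenever the induction descends through a non-terminal $A$ via rule \textbf{var.1}, the choices of $P(A)$ are choices ``in $G[p]$'' in the sense of being reachable from $p$, so the prefix-property hypothesis still applies; similarly, when we descend into $p_1$, $p_2$, or the two sides of a choice, any choice subexpression that arises is a choice subexpression of the original $p$.

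The easy cases are \textbf{empty.1} and \textbf{char.1} (which hold verbatim under $\matchpeg$), \textbf{var.1} and \textbf{con.1} (apply the induction hypothesis and re-apply the corresponding PEG rule), and \textbf{choice.1} (apply the induction hypothesis to the premise $G[p_1] \, xy \matchg y$ and then rule \textbf{ord.1}). All of these are routine.

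The real work is in case \textbf{choice.2}, where $G[\Choice{p_1}{p_2}] \, xy \matchg y$ was derived from $G[p_2] \, xy \matchg y$. The induction hypothesis gives $G[p_2] \, xy \matchpeg y$, so $x \in L(G[\Choice{p_1}{p_2}])$. I would split on the outcome of $G[p_1]$ under $\matchpeg$ (using completeness of $G$, which for the intended applications comes for free). If $G[p_1] \, xy \matchpeg \Nothing$, rule \textbf{ord.2} together with the induction hypothesis yields the conclusion. If instead $G[p_1] \, xy \matchpeg y'$ for some suffix $y'$, then by Lemma~\ref{cfgd:subset} we have $G[p_1] \, xy \matchg y'$, so $p_1$ matches some prefix $x'$ of $xy$ with $x'y' = xy$ and $x' \in L(G[p_1]) \subseteq L(G[\Choice{p_1}{p_2}])$. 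Now both $x$ and $x'$ lie in $L(G[\Choice{p_1}{p_2}])$ and both are prefixes of $xy$; by the assumed prefix property of this choice's language, $x = x'$, hence $y' = y$, and rule \textbf{ord.1} closes the case.

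The main obstacle, as sketched above, is precisely the \textbf{choice.2} case: in arbitrary PE-CFGs, the CFG semantics may have taken the second alternative even though the first could match a different prefix, and this is exactly what prevents $\matchg$ and $\matchpeg$ from agreeing in general. The prefix-property hypothesis is the minimal ingredient that forces the two candidate prefixes matched by $p_1$ and $p_2$ to coincide, which is what lets us safely fall back on \textbf{ord.1} when \textbf{ord.2} is unavailable. Everything else in the proof is bookkeeping.
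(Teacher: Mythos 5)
Your proof is correct and follows essentially the same route as the paper's: induction on the height of the $\matchg$ proof tree, with all the work in case \textbf{choice.2}, where the prefix property of the choice's language forces any prefix matched by $p_1$ to coincide with $x$, so that either \textbf{ord.1} or \textbf{ord.2} applies. Your handling of that case---splitting on the $\matchpeg$ outcome of $p_1$ and pulling it back through Lemma~\ref{cfgd:subset}---is in fact a slight refinement, since it avoids the paper's appeal to the induction hypothesis on a derivation of $G[p_1]\,xy \matchg y$ that is not literally a subtree of the tree being inducted on, at the (explicitly acknowledged) cost of invoking completeness, which the paper's own proof also needs implicitly to conclude $G[p_1]\,xy \matchpeg \Nothing$.
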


\begin{proof}
By induction on the height of the proof tree for $G[p] \, xy \matchg y$. The interesting case
if {\bf choice.2}. We have $p = p_1\,|\,p_2$, so $L(G[p]) = L(G[p_1]) \cup L(G[p_2])$, and
both $L(G[p_1])$ and $L(G[p_2])$ have the prefix property. This means that either
$G[p_1] \, xy \matchg y$, so we can use the induction hypothesis and rule
{\bf ord.1} to get $G[p] \, xy \matchpeg y$, or there is no suffix $z$ of $xy$ with $G[p_1]
\, xy \matchg z$, as that would violate the prefix property of $L(G[p_1])$. In this case, 
we have $G[p_1] \, xy \matchpeg \Nothing$ by modus tollens of Lemma~\ref{cfgd:subset}, 
and we can now use the induction hypothesis with $p_2$, and rule {\bf ord.2}, to get $G[p] \, xy
\matchpeg y$.
\end{proof}

The second lemma distributes the prefix property to the components of a right-linear
parsing expression, when the language of the expression has the prefix property.:

\begin{lemma}
\label{lemma:rlprefix1}
Given a right-linear PE-CFG $G$ and a right-linear parsing expression $p$, if $L(G[p])$
has the prefix property then, for any parsing expression $q$ in $p$, $L(G[q])$ has the prefix
property.
\end{lemma}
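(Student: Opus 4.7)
My plan is to prove this by structural induction on the right-linear parsing expression $p$. The right-linear syntax gives five cases: the atomic forms $\varepsilon$, $a$, and $A$; a concatenation $p_1 p_2$ in which $p_1$ must be a terminal and $p_2$ must itself be right-linear; and a choice $\Choice{p_1}{p_2}$ in which both alternatives are right-linear. In each case I will verify the prefix property for every syntactic subexpression of $p$, either directly from the hypothesis on $L(G[p])$ or by first showing that each immediate child of $p$ has the prefix property and then appealing to the induction hypothesis.

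The three atomic cases are immediate, since $p$ is its own only syntactic subexpression and $L(G[p])$ is assumed to have the prefix property. The choice case $\Choice{p_1}{p_2}$ is also short: rules \textbf{choice.1} and \textbf{choice.2} give $L(G[p_1]) \cup L(G[p_2]) \subseteq L(G[p])$, so each $L(G[p_i])$ inherits the prefix property from its superset, after which the induction hypothesis applied to $p_1$ and $p_2$ dispatches their internal subexpressions.

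The substantive case, and the main obstacle, is the concatenation $p = a\, p_2$. Here I need to show that $L(G[p_2])$ has the prefix property before invoking the induction hypothesis, since the other immediate subexpression $a$ is trivial ($L(G[a]) = \{a\}$). I will argue by contraposition: if there are distinct $y_1, y_2 \in L(G[p_2])$ with $y_1$ a proper prefix of $y_2$, then rule \textbf{char.1} combined with rule \textbf{con.1} assembles derivations witnessing $a y_1, a y_2 \in L(G[p])$, and $a y_1$ is a proper prefix of $a y_2$, contradicting the hypothesis on $L(G[p])$. This step hinges on right-linearity: because $p_1$ is a single fixed terminal, every match of $p_1$ consumes exactly the same one-character prefix, so prefix relationships transfer cleanly in both directions between $L(G[p])$ and $L(G[p_2])$. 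Once $L(G[p_2])$ is known to have the prefix property, the induction hypothesis covers its interior subexpressions and the case is complete.
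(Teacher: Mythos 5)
Your proof is correct and follows essentially the same route as the paper's: structural induction on $p$, with the concatenation case resolved by observing that right-linearity forces $p_1$ to be a single terminal, so that $L(G[\Con{a}{p_2}]) = \{a\}\cdot L(G[p_2])$ and the prefix property transfers to $L(G[p_2])$ before the induction hypothesis is applied. You merely spell out via contraposition what the paper asserts in one line, and handle the choice and atomic cases explicitly where the paper leaves them implicit.
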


\begin{proof}
By structural induction on $p$. The interesting case is $p = \Con{p_1}{p_2}$. As $p$ is right-linear, $p_1$ is a terminal and $p_2$ is right-linear, so $L(G[p_1])$ trivially has the
prefix property. As $L(G[p]) = L(G[p_1]) \cdot L(G[p_2])$, $L(G[p_2])$ must have the
prefix property, and we can use the induction hypothesis to conclude that any parsing expression
$L(G[q])$ will also have the prefix property for any parsing expression $q$ in $p_2$.
\end{proof}

The third lemma distributes the prefix property to all of the components of a
right-linear grammar, when the language of the starting expression has the prefix property:

\begin{lemma}
\label{lemma:rlprefix2}
Given a right-linear PE-CFG $G$ and a right-linear parsing expression $p$, if $L(G[p])$
has the prefix property then, for any parsing expression $q$ in $G[p]$, $L(G[q])$ has the prefix
property.
\end{lemma}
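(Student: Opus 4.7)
The plan is to combine Lemma~\ref{lemma:rlprefix1}, which distributes the prefix property down through the structural subexpressions of a single right-linear expression, with the observation that rule \textbf{(var.1)} yields $L(G[A]) = L(G[P(A)])$ for every non-terminal $A$. This identity lets the prefix property ``jump through'' a non-terminal, passing from $A$ to the body of its production, so that Lemma~\ref{lemma:rlprefix1} can then be reapplied inside that body.

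First, I would apply Lemma~\ref{lemma:rlprefix1} directly to $p$, obtaining the prefix property for every structural subexpression of $p$, including in particular each non-terminal $A$ occurring in $p$. The identity $L(G[A]) = L(G[P(A)])$ then transfers the prefix property from $A$ to $P(A)$. Since $G$ is right-linear, $P(A)$ is itself a right-linear parsing expression, so Lemma~\ref{lemma:rlprefix1} applies again to $P(A)$ and hands us the prefix property for all of its structural subexpressions. Iterating this two-step process---take subexpressions, unfold a non-terminal, take subexpressions again---covers everything reachable from $p$.

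To package this as a proof, I would induct on the least number $n$ of non-terminal unfoldings needed to reach $q$ from $p$. The base case $n = 0$ (so $q$ is structurally inside $p$) is immediate from Lemma~\ref{lemma:rlprefix1}. For the inductive step, $q$ is a structural subexpression of $P(A)$ for some non-terminal $A$ that is itself reached in $n$ unfoldings; by the induction hypothesis $L(G[A])$ has the prefix property, hence so does $L(G[P(A)])$, and Lemma~\ref{lemma:rlprefix1} then yields the prefix property for $L(G[q])$. Finiteness of $V$ ensures that the iteration visits only finitely many production bodies, so the induction is well-founded.

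The main obstacle is simply pinning down what ``$q$ in $G[p]$'' means, since it is not a purely structural relation: it crosses non-terminal boundaries. The cleanest route is to define it as the least relation closed under structural containment and under the step $A \mapsto P(A)$ for every non-terminal $A$ already in the relation. Once that fixed-point definition is in place, the induction above on the stage at which $q$ first enters the relation is routine, and no further ingredient beyond Lemma~\ref{lemma:rlprefix1} and the var-rule identity is needed.
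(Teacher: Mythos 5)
Your proposal is correct and matches the paper's own argument: the paper also proves this by induction on the number of steps needed to reach $q$ from $p$ (zero if $q=p$, one if $q$ is structurally inside $p$, $k+1$ if $q$ sits inside $P(A)$ for an $A$ reached in $k$ steps), using Lemma~\ref{lemma:rlprefix1} for the structural part and the identity $L(G[A]) = L(G[P(A)])$ to cross non-terminal boundaries. Your write-up just makes explicit the fixed-point reading of ``$q$ in $G[p]$'' that the paper leaves implicit.
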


\begin{proof}
By induction on the number of steps necessary to reach $q$ from $p$: zero steps if $p = q$,
one step if $q$ is a part of $p$, $k+1$ steps if $q$ is a part of $P(A)$ where $A$ is
reachable from $p$ in $k$ steps.
\end{proof}

The final lemma uses the first and the third lemma, plus Lemma~\ref{cfgd:subset}, to
prove the equivalence between right-linear CFGs with the prefix property and PEGs:

\begin{lemma}
\label{lemma:rightlin}
Given a right-linear PE-CFG $G$ and a right-linear parsing expression $p$, if $L(G[p])$ has the prefix property then $G[p] \, xy \matchg y$ if and only if $G[p] \, xy \matchpeg y$.
\end{lemma}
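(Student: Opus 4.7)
The plan is to read this as an immediate corollary of the three preceding lemmas, with essentially no new work beyond assembly. For the $(\Leftarrow)$ direction I would simply invoke Lemma~\ref{cfgd:subset}: if $G[p] \, xy \matchpeg y$ then $G[p] \, xy \matchg y$, irrespective of right-linearity or the prefix property. So the only content is in the forward direction.

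For the $(\Rightarrow)$ direction the plan is to apply Lemma~\ref{lemma:prefix}, whose conclusion is exactly what we want. This reduces the task to verifying its hypothesis in our setting: for every choice subexpression $\Choice{q}{r}$ appearing in $G[p]$, the language $L(G[\Choice{q}{r}])$ must have the prefix property. But Lemma~\ref{lemma:rlprefix2} gives precisely this. Starting from a right-linear grammar $G$ and a right-linear expression $p$ with $L(G[p])$ having the prefix property, Lemma~\ref{lemma:rlprefix2} tells us that for every parsing expression reachable from $p$ (whether structurally inside $p$ or inside the production of some non-terminal reachable from $p$), the associated language has the prefix property. In particular this applies to every choice expression $\Choice{q}{r}$ in $G[p]$. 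So the antecedent of Lemma~\ref{lemma:prefix} is satisfied, and applying it to the assumption $G[p] \, xy \matchg y$ yields $G[p] \, xy \matchpeg y$.

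The main obstacle I expect is essentially bookkeeping rather than mathematical: one needs the notion of "a choice in $G[p]$" used by Lemma~\ref{lemma:prefix} to line up with the notion of "a parsing expression $q$ in $G[p]$" produced by Lemma~\ref{lemma:rlprefix2}. Both of these should refer to the closure obtained by descending through concatenation, choice, and the productions of reachable non-terminals, so the match is clean. A minor point worth double-checking is that right-linearity is genuinely needed only to invoke Lemma~\ref{lemma:rlprefix2}; the forward step via Lemma~\ref{lemma:prefix} uses only the prefix-property hypothesis it distributes for us, and the backward step uses neither. This suggests the proof itself will occupy only a few lines.
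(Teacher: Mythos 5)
Your proposal is correct and matches the paper's own proof exactly: the forward direction combines Lemma~\ref{lemma:rlprefix2} (to give every choice in $G[p]$ the prefix property) with Lemma~\ref{lemma:prefix}, and the reverse direction is Lemma~\ref{cfgd:subset}. No further comment is needed.
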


\begin{proof}
By Lemma~\ref{lemma:rlprefix2}, the choice expression in $G[p]$ also has the prefix property,
so, by Lemma~\ref{lemma:prefix}, $G[p] \, xy \matchg y$ implies $G[p] \, xy \matchpeg y$.
By Lemma~\ref{cfgd:subset}, $G[p] \, xy \matchpeg y$ implies $G[p] \, xy \matchg y$.
\end{proof}

The prefix-property restriction may seem overly restrictive, but we can obtain a right-linear grammar with the prefix property from any right-linear grammar $G$ by applying the following transformation, where we use $\$ \notin T$ as an end-of-input marker, to the right side of $G$'s productions and to its initial parsing expression:

\begin{eqnarray*}
\Pi(\varepsilon) & = & \$ \\
\Pi(a) & = & a\$ \\
\Pi(A) & = & A \\
\Pi(p_1p_2) & = & p_1\Pi(p_2) \\
\Pi(p_1 \, | \, p_2) & = & \Pi(p_1) \, | \, \Pi(p_2)
\end{eqnarray*}

A (strong) LL-regular CFG~\cite{nijholt,jazarbek} is a generalization of LL($k$) CFGs where a predictive top-down parser may decide which alternative to take based on where the rest of the input falls on a set of regular partitions of $T^*$. Formally, a CFG $G$ is LL-regular if there is a regular partition $\pi$ of $T^*$ such that for any two leftmost derivations of the following form,  if $x \equiv y \, (\mathrm{mod}\, \pi)$ then $\gamma = \delta$:

\begin{eqnarray*}
& S \stackrel{*}{\Rightarrow}_G w_1A\alpha_1 \Rightarrow_G w_1\gamma\alpha_1 \stackrel{*}{\Rightarrow}_G w_1x & \\
& S \stackrel{*}{\Rightarrow}_G w_2A\alpha_2 \Rightarrow_G w_1\delta\alpha_2 \stackrel{*}{\Rightarrow}_G w_2y &
\end{eqnarray*}

Restating the definition for PE-CFGs is straightforward. First we introduce the $\mathit{BLOCK}^G_\pi$ set that tells in which blocks of partition $\pi$ the input for $p$ falls:
\[
\begin{array}{l}
\mathit{BLOCK}^G_\pi(p, A) =  \{ B_k \in \pi \, | \, \mbox{$G[p] \, xy \matchg y$ and $G[A] \, xy \matchg y$}\\
\tenspaces \tenspaces \tenspaces \tenspaces \,\,\mbox{are in a proof tree for $G \, w\$ \matchg \$ $} \\
\tenspaces \tenspaces \tenspaces \tenspaces \,\,\mbox{and $xy \in B_k$}\}
\end{array}
\]

A PE-CFG $G$ with BNF structure is LL-regular if and only if there is a partition $\pi$ of $T^* \cdot \{ \$ \}$ such that every choice expression $p_1 \, | \, p_2$ of every production $A \rightarrow p$ has $\mathit{BLOCK}^G_\pi(p_1, A) \cap \mathit{BLOCK}^G_\pi(p_2, A) = \emptyset$.

The original definition of LL-regular grammars uses the partition where the rest of the input falls to predict alternatives, so our addition of an end-of-input marker \$ is not changing the class of grammars we are defining, while ensuring that the blocks of the regular partition have the prefix property.

Any strong-LL($k$) grammar is also LL-regular~\cite{nijholt82}, so a simple reordering of alternatives is not sufficient for obtaining a PEG that recognizes the same language as an LL-regular grammar $G$. But we can use the same approach we used in the translation $\Phi_b$ to translate an LL-regular grammar $G$ into a PEG \RE{G} that recognizes the same language, using an and-predicate to add a lookahead expression to the front of the alternatives of each choice.

We assume that each block $B_k$ of the regular partition $\pi$ has a corresponding right-linear grammar $G_{B_k}$, where the intersection of the non-terminal sets of $G$ and of all these grammars is empty; the non-terminal set of \RE{G} is the union of these sets.

We form the regular lookahead of an alternative $p$, $\mathcal{L}^G_r(p, A)$, by making a choice of the grammars for each block in $\mathit{BLOCK}^G_\pi(p, A)$, and wrapping this choice in an and-predicate:
\[
\mathcal{L}^G_r(p, A) = \&\mathit{choice}(\{S_{B_k} | B_k \in \mathit{BLOCK}^G_\pi(p, A)\})
\]
where {\em choice} is the function we used to build the lookahead expressions for $\varphi_b$ and $\phi_b$.

Function $\rho^G(p, A)$ adds lookahead expressions where necessary, assuming that the original grammar has BNF structure:
\begin{eqnarray*}
\rho^G(\Choice{p_1}{p_2}, A) & = & \Choice{\mathcal{L}^G_r(p_1, A)p_1\,}{\,\rho^G(p_2, A)} \\
\rho^G(\varepsilon, A) & = & \varepsilon \\
\rho^G(a, A) & = & a \\
\rho^G(p_1p_2, A) & = & p_1p_2 \\
\rho^G(B, A) & = & B
\end{eqnarray*}

We obtain the productions of \RE{G} by applying $\rho^G$ to the right-side of each production of $G$, and then adding the productions for each $G_{B_k}$. We can prove a lemma similar to Lemma~\ref{lemma:llkpeg1}:

\begin{lemma}
\label{lemma:llreg}
Given an LL-regular PE-CFG $G$, if there is a proof tree for $G \, x\$ \matchg \$ $ then, for every subtree $G[p] \, x^\prime\$ \matchg x^{\prime\prime}\$ $, we have $\RE{G}[\rho^G(p, A)] \, x^\prime \$ \matchpeg x^{\prime\prime}\$ $, where $A$ is the first non-terminal that appears as $G[A]$ in a path from the conclusion $G[p] \, x^\prime\$ \matchg x^{\prime\prime}\$ $ of the subtree to the conclusion $G \, x\$ \matchg \$ $ of the whole tree.
\end{lemma}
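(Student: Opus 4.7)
The plan is to prove this lemma by induction on the height of the proof tree for $G[p] \, x'\$ \matchg x''\$$, following the same overall structure as the proof of Lemma~\ref{lemma:llkpeg1}, but replacing the $\mathit{FIRST}_k \bullet_k \mathit{FOLLOW}_k$ reasoning with reasoning about the regular partition $\pi$ and the auxiliary right-linear grammars $G_{B_k}$. The base cases and all cases where $p$ is not a choice are immediate, because $\rho^G$ is the identity on $\varepsilon$, terminals, concatenations and non-terminals; we simply reapply the corresponding PEG rule after invoking the induction hypothesis on the premises.

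The two interesting cases are \textbf{choice.1} and \textbf{choice.2}. In both, we first observe that, by BNF structure, the derivation $G[p_1\,|\,p_2] \, x'\$ \matchg x''\$$ sits immediately inside a subtree $G[A] \, x'\$ \matchg x''\$$, so $x'\$$ belongs to exactly one block $B_j$ of $\pi$ (the partition is of $T^*\cdot\{\$\}$, so $\$\notin T$ forces this string into a unique block). For \textbf{choice.1}, the premise $G[p_1] \, x'\$ \matchg x''\$$ gives us $B_j \in \mathit{BLOCK}^G_\pi(p_1, A)$ directly from the definition of $\mathit{BLOCK}$; hence $S_{B_j}$ is one of the alternatives inside the and-predicate $\mathcal{L}^G_r(p_1, A)$. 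Since $G_{B_j}$ recognizes exactly $B_j$, we have $G_{B_j}[S_{B_j}] \, x'\$ \matchg \$$ (actually $\matchg \varepsilon$ since $x'\$ \in B_j$), and I would apply Lemma~\ref{lemma:rightlin} — using that $B_j$ has the prefix property because every string in it ends in the marker $\$$ which occurs nowhere else — to lift this to $\matchpeg$. Combined with a finite number of ord.1/ord.2 steps, the and-predicate succeeds without consuming input, and then the induction hypothesis applied to $p_1$, together with rules con.1 and ord.1, yields the desired PEG derivation.

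For \textbf{choice.2}, the premise gives us $B_j \in \mathit{BLOCK}^G_\pi(p_2, A)$, so by the LL-regular property $B_j \notin \mathit{BLOCK}^G_\pi(p_1, A)$, and since $\pi$ is a partition, $x'\$$ is not in any block appearing as an alternative inside $\mathcal{L}^G_r(p_1, A)$. For each such block $B_k$, $x'\$ \notin B_k$, so $G_{B_k}[S_{B_k}]$ cannot match $x'\$$; crucially, no proper prefix of $x'\$$ can be in $B_k$ either, because every string of $B_k$ contains a $\$$ (as $\pi$ partitions $T^*\cdot\{\$\}$) while proper prefixes of $x'\$$ do not. Hence, by Lemma~\ref{lemma:rightlin} and the completeness of right-linear grammars, each $G_{B_k}[S_{B_k}] \, x'\$ \matchpeg \Nothing$, so the inner choice yields $\Nothing$ by iterated ord.3, and the and-predicate $\mathcal{L}^G_r(p_1, A)$ fails on $x'\$$ (via the expansion $!!p$). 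Then rules con.2 and ord.2, together with the induction hypothesis applied to $p_2$, complete the case.

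The main obstacle is the step where I must argue that each block's right-linear grammar $G_{B_k}$ behaves identically under $\matchg$ and $\matchpeg$, and moreover rejects every prefix of $x'\$$ when $x'\$ \notin B_k$. This is not purely syntactic: it relies both on Lemma~\ref{lemma:rightlin} and on the careful design choice that $\pi$ partitions $T^*\cdot\{\$\}$ rather than $T^*$, which is precisely what gives the blocks the prefix property needed to invoke that lemma. Once that equivalence is in hand, the rest of the argument is a direct regular-partition analogue of the strong-LL($k$) proof of Lemma~\ref{lemma:llkpeg1}.
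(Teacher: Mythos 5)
Your proof is correct and follows essentially the same route as the paper's: induction on the height of the $\Lg$ proof tree, with the \textbf{choice.1} case handled by showing the lookahead succeeds via Lemma~\ref{lemma:rightlin}, and \textbf{choice.2} by using the partition property to make the lookahead fail. The only difference is that you spell out in full the argument the paper compresses into ``an argument similar to the one used in \textbf{choice.1}'' --- in particular, your observation that no proper prefix of $x^\prime\$$ can lie in any block of $\mathit{BLOCK}^G_\pi(p_1, A)$ (because every block of $\pi$ consists of strings ending in the marker $\$ \notin T$) is exactly the point needed to turn ``the lookahead does not succeed'' into ``the lookahead yields $\Nothing$''.
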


\begin{proof} By induction on the height of the proof tree for $G[p] \, x^\prime \$ \matchg x^{\prime\prime}\$ $. The interesting cases are {\bf choice.1} and {\bf choice.2}. For {\bf choice.1}, we have $G[p_1] \, x^\prime \$ \matchg x^{\prime\prime}\$ $. Because of the BNF structure, this is a subtree of $G[A] \, x^\prime \$ \matchg x^{\prime\prime}\$ $, and we have $x^\prime \$ $ in some block $B_k \in \mathit{BLOCK}^G_\pi(p_1, A)$. So $x^\prime \$ \in L(G_{B_k})$. It is easy to see that $x^\prime \$ \in L(\mathit{choice}(\{S_{B_k} | B_k \in \mathit{BLOCK}^G_\pi(p, A)\}))$; this grammar is right-linear, so by Lemma~\ref{lemma:rightlin} and the semantics of the and-predicate we have $\RE{G}[\mathcal{L}^G_r(p_1, A)] \, x^\prime \$ \matchpeg x^\prime \$ $. We have $\RE{G}[p_1] \, x^\prime\$ \matchpeg x^{\prime\prime}\$ $ by the induction hypothesis, and can use rules {\bf con.1} and {\bf ord.1} to get $\RE{G}[\rho^G(\Choice{p_1}{p_2}, A)] \, x^\prime \$ \matchpeg x^{\prime\prime}\$ $.

  For case {\bf choice.2}, we can use the LL-regular property to conclude that the block $B_k$ with $x^\prime \$ \in B_k$ is not in $\mathit{BLOCK}^G_\pi(p_1, A)$, and then use the definition of $\mathcal{L}^G_r$ and an argument similar to the one used in {\bf choice.1} to get $\RE{G}[\mathcal{L}^G_r(p_1, A)] \, x^\prime\$ \matchpeg \Nothing$. We can use the induction hypothesis to get $\RE{G}[\rho^G(p_2, A)] \, x^\prime\$ \matchpeg x^{\prime\prime}\$$, and then use rules {\bf con.2} and {\bf ord.2} to get $\RE{G}[\rho^G(\Choice{p_1}{p_2}, A)] \, x^\prime \$ \matchpeg x^{\prime\prime}\$ $.
\end{proof}

The proof that $\RE{G}$ has the same language as $G$ is a corollary of Lemmas~\ref{lemma:llreg} and~\ref{llk:subsetb}. We can use Lemma~\ref{llk:subsetb} even though $\RE{G}$ has non-terminals that are not present in $G$; these extra non-terminals are only referenced inside predicates, so they become useless when the predicates are removed, and can also be removed.

\section{Related Work}

Parsing Expression Grammars have generated much academic interest since their introduction by Ford~\cite{pegford}, with over sixty citations of Ford's paper in ACM's Digital Library. But just a few of these works are concerned with the theory of PEGs and their relation to other parsing tools; this section discusses these works and how they relate to our work.

Ford~\cite{pegford} leaves open the problem of how PEGs and CFGs relate, and does not outline a strategy to solve this problem. The solution of this problem for the major classes of top-down CFGs is a contribution of our work, and our recasting of the CFGs in a recognition-based formalism is another contribution that shows where CFGs and PEGs diverge.

This paper is an extension of previous, unpublished work done~\cite{sergio:tese}. This
older work has the first version of the PE-CFG semantics, as well a the transformations
between strong-LL($k$) grammars and PEGs, and the proof of the correspondence
between LL(1) grammars and PEGs. The proofs have been revised and improved.

Redziejowski~\cite{redz09} adapts the {\em FIRST} and {\em FOLLOW} relations of CFGs to the study of Parsing Expression Grammars, defining PEG analogs of these two relations. Redziejowski then uses the analogs for a conservative approximation of when an ordered choice is commutative or not, by defining a PEG analog of the LL(1) restriction. He admits that the approximation is too conservative for practical use, specially because of its treatment of syntactic predicates. We do not attempt to give definitions of {\em FIRST} and {\em FOLLOW} for PEGs, limiting our redefinitions of these relations just to our PE-CFG formalism, and using them to prove correspondences between LL(1) and strong-LL($k$) CFGs and structurally similar PEGs, something that Redziejowski does not explore.

An earlier work by Redziejowski~\cite{redz08} presents several identities regarding the languages defined by PEGs, although the author concludes that the identities are only useful for obtaining approximations of a PEGs language, and he also does not try to relate the languages of PEGs and of CFGs.

In a more recent work~\cite{redz13}, Redziejowski, based on our earlier, unpublished
work~\cite{sergio:tese}, extends our correspondence between LL(1) grammars and PEGs
to a larger class of grammars, and calls these grammars LL(1p). He notes that checking
whether a grammar is LL(1p) is harder than checking if the grammar is LL(1). The paper
also uses earlier versions of the new semantics for CFGs and PEGs that we presented
in Section 2.

Schmitz~\cite{sch06} presents both an ambiguity detection algorithm for CFGs in the context of the SDF2 formalism, an extension of CFGs, and an algorithm for detecting whether an ordered choice in a PEG is commutative or not. Schmitz notes that an overly strict ambiguity detector can be as restrictive as introducing ordering, but does not attempt to further study the relation between CFGs and PEGs.

Parr and Quong~\cite{parr94} add semantic and syntactic predicates to LL($k$) grammars to get the {\em pred-LL($k$)} parsing strategy. The syntactic predicates of pred-LL($k$) are only used in productions that have LL($k$) conflicts. In these productions, the parser tries to match the predicates of each alternative in the order they are given in the grammar definition, choosing the first alternative with a predicate that succeeds. Backtracking is strictly local, as with PEGs, so a subsequent failure does not make the parser try other productions. The paper does not give a formal specification of pred-LL($k$) grammars, nor how they relate to the class of LL($k$) grammars.

Parr and Fisher~\cite{parr11} introduce the {\em LL($*$)} parsing strategy, which uses the basic idea of LL-regular grammars, with a predictive top-down parser for these grammars that uses a deterministic finite automata to select which alternative of a non-terminal to take. If the grammar is not LL-regular, the LL($*$) parser can use semantic and syntactic predicates with local backtracking, as in pred-LL($k$), and also automatically introduce predicates, via a suitably named ``PEG mode''.

Generalized LL parsing~\cite{scott:glltree,scott:gll} extends the idea of LL(1) recursive
descent parsing to the full class of context-free grammars (including ambiguous grammars),
by making the parser proceed among the different conflicting alternatives ``in parallel''.
It replaces the call stack of a recursive descent parser with a {\em graph structured stack}
(GSS), a data structure adapted from Generalized LR parsers~\cite{tomita:gss}.
As the resulting parsers can parse ambiguous grammars, construction of the parse tree
is non-trivial to implement efficiently~\cite{scott:gllmodel,scott:glltree}, 
using a {\em shared packed parse forest} data structure also adapted from
generalized bottom-up techniques~\cite{scott:brnglr}.

Even if we restrict our domain to unambiguous grammars, we cannot use GLL parsing
as a basis for a correspondence between CFGs and PEGs; a GLL parser is not
a predictive parser, so we cannot try to encode the predictive part of the parser in a syntactic
predicate, as we did for LL-regular grammars, nor we can exploit a property of the
grammars, as we did for LL(1) and strong-LL($k$) grammars.

\section{Conclusions}

We presented a new formalism for context-free grammars that is based on recognizing (parts of) strings instead of generating them. We adopted a subset of the syntax of parsing expression grammars, and the notion of letting a grammar recognize just part of an input string, to purposefully get a definition for CFGs that is closer to PEGs, yet defines the same class of languages as traditional CFGs. These PE-CFGs define the same class of language as traditional CFGs, and simple transformations lets us get a PE-CFG from a CFG and vice-versa.

Our semantics for PE-CFGs has a non-deterministic choice operation. We showed how a deterministic choice operation based on the notions of failure and ordering turns PE-CFGs into quasi-PEGs; the addition of a {\em not} syntactic predicate then gave us a semantics that is equivalent to Ford's original semantics for PEGs. We then used our new formulations of CFGs and PEGs to study correspondences between four classes of CFGs and PEGs: LL(1), strong-LL($k$), right-linear and LL-regular. We proved that LL(1) grammars already define the same language either interpreted as CFGs or as PEGs, as was already suspected, and gave transformations that yield equivalent PEGs for the grammars in the other three classes. 

All our transformations preserve the structure of the original grammars; we do not change or remove non-terminals, nor change the alternatives of a non-terminal, just add predicates to the beginning or the end of each alternative. This means that our transformations have a practical application in reusing grammars made for one formalism with tools made for the other, as conserving the structure of the grammar makes it easier to carry semantic actions from one tool to another without modification.

Our transformations assume the presence of some kind of end-of-input marker, and incorporate this marker in the resulting PEG. This does not affect our equivalence results, but it does have implications in composability of PEGs resulting from our transformations. To use a PEG obtained from one of our transformations as part of a larger PEG (for embedding one language in another, for example) requires a suitable ``end-of-input marker'' to be picked (the boundaries between languages have to be explicit). We believe this problem should be easily solvable in practice.

A possible next step of this work would be the study of relationship between PEGs
and classic bottom-up CFGs, such as LR($k$)~\cite{lrk:knuth} and Simple
LR($k$)~\cite{slrk:deremer}. A key issue regarding this study is the fact that bottom-up
grammars can have left-recursive rules, but a PEG with left-recursive rules is
not complete~\cite{pegford}.

Recently, it was suggested a conservative extension of PEGs' semantics
that gives meaning for left-recursive rules, so PEGs with these rules
would also be complete~\cite{left:sblp}. Based on this PEGs' extesion we
could try to establish a correspondence between bottom-up grammars and
PEGs, and see if it is possible to achieve an equivalent PEG, with a
similar structure, from a bottom-up CFG.

\bibliographystyle{elsarticle-num}
\bibliography{paper-scp2}

\begin{thebibliography}{10}
\expandafter\ifx\csname url\endcsname\relax
  \def\url#1{\texttt{#1}}\fi
\expandafter\ifx\csname urlprefix\endcsname\relax\def\urlprefix{URL }\fi
\expandafter\ifx\csname href\endcsname\relax
  \def\href#1#2{#2} \def\path#1{#1}\fi

\bibitem{pegford}
B.~Ford, {P}arsing {E}xpression {G}rammars: a recognition-based syntactic
  foundation, in: {Proceedings of the 31st ACM SIGPLAN-SIGACT Symposium on
  Principles of Programming Languages}, POPL '04, ACM, New York, NY, USA, 2004,
  pp. 111--122.

\bibitem{tdpl}
A.~Birman, J.~D. Ullman, Parsing algorithms with backtrack, Information and
  Control 23~(1) (1973) 1 -- 34.

\bibitem{gtdpl}
A.~V. Aho, J.~D. Ullman, The Theory of Parsing, Translation, and Compiling,
  Prentice-Hall, Inc., Upper Saddle River, NJ, USA, 1972.

\bibitem{antlr:parr}
T.~J. Parr, R.~W. Quong, Antlr: a predicated-ll(k) parser generator, Softw.
  Pract. Exper. 25~(7) (1995) 789--810.
\newblock \href {http://dx.doi.org/10.1002/spe.4380250705}
  {\path{doi:10.1002/spe.4380250705}}.

\bibitem{kahn}
G.~Kahn, Natural semantics, in: {Proceedings of the 4th Annual Symposium on
  Theoretical Aspects of Computer Science}, STACS '87, Springer-Verlag, London,
  UK, 1987, pp. 22--39.

\bibitem{winskel}
G.~Winskel, The Formal Semantics of Programming Languages: An Introduction,
  Foundations of Computing, MIT Press, 1993.

\bibitem{compcompilers}
C.~F. Clark, Message to comp.compilers reference 05-08-115,
  \url{http://compilers.iecc.com/comparch/article/05-09-009} (2005).

\bibitem{parsing}
D.~Grune, C.~J. Jacobs, Parsing Techniques -- A Practical Guide, Ellis Horwood,
  1991.

\bibitem{knuth}
D.~E. Knuth, \href{http://dx.doi.org/10.1007/BF00289517}{Top-down syntax
  analysis}, Acta Informatica 1~(2) (1971) 79--110.
\newblock \href {http://dx.doi.org/10.1007/BF00289517}
  {\path{doi:10.1007/BF00289517}}.
\newline\urlprefix\url{http://dx.doi.org/10.1007/BF00289517}

\bibitem{undecamb}
J.~E. Hopcroft, J.~D. Ullman, Introduction to Automata Theory, Languages, and
  Computation, 1st Edition, Addison-Wesley Longman Publishing Co., Inc.,
  Boston, MA, USA, 1979.

\bibitem{nijholt}
A.~{Nijholt}, {LL}-regular grammars, {International Journal of Computer
  Mathematics} 8 (1980) 303--318.

\bibitem{jazarbek}
S.~Jarzabek, T.~Krawczyk, {LL}-regular grammars, {Information Processing
  Letters} 4~(2) (1975) 31--37.

\bibitem{nijholt82}
A.~{Nijholt}, From {LL}-regular to {LL(1)} grammars: Transformations, covers
  and parsing, {RAIRO Informatique Th\'eorique et Applications} 16 (1982)
  387--406.

\bibitem{sergio:tese}
S.~Q. de~Medeiros, Correspond\^{e}ncia entre {PEGs} e classes de gram\'{a}ticas
  livres de contexto, Ph.D. thesis, PUC-Rio (2010).

\bibitem{redz09}
R.~R. Redziejowski, Applying classical concepts to parsing expression grammar,
  Fundamenta Informaticae 93 (2009) 325--336.

\bibitem{redz08}
R.~R. Redziejowski, Some aspects of parsing expression grammar, Fundamenta
  Informaticae 85 (2008) 441--451.

\bibitem{redz13}
R.~R. Redziejowski, From {EBNF} to {PEG}, Fundamenta Informaticae 128 (2013)
  177--191.

\bibitem{sch06}
S.~Schmitz, Modular syntax demands verification, Tech. Rep. I3S/RR-2006-32-FR,
  Laboratoire {I3S}, Universit{\'e} de Nice-Sophia Antipolis, France (Oct.
  2006).

\bibitem{parr94}
T.~J. Parr, R.~W. Quong, Adding semantic and syntactic predicates to {LL(k)}:
  {pred-LL(k)}, in: {Proceedings of the 5th International Conference on
  Compiler Construction}, CC '94, Springer-Verlag, London, UK, 1994, pp.
  263--277.

\bibitem{parr11}
T.~Parr, K.~Fisher, {LL(*)}: the foundation of the {ANTLR} parser generator,
  in: {Proceedings of the 32nd ACM SIGPLAN conference on Programming language
  design and implementation}, PLDI '11, ACM, New York, NY, USA, 2011, pp.
  425--436.

\bibitem{scott:glltree}
E.~Scott, A.~Johnstone,
  \href{http://www.sciencedirect.com/science/article/pii/S0167642312000627}{{GLL}
  parse-tree generation}, Science of Computer Programming 78~(10) (2013) 1828
  -- 1844.
\newblock \href
  {http://dx.doi.org/http://dx.doi.org/10.1016/j.scico.2012.03.005}
  {\path{doi:http://dx.doi.org/10.1016/j.scico.2012.03.005}}.
\newline\urlprefix\url{http://www.sciencedirect.com/science/article/pii/S0167642312000627}

\bibitem{scott:gll}
A.~Johnstone, E.~Scott,
  \href{http://dx.doi.org/10.1007/978-3-642-19440-5\_4}{Modelling {GLL} parser
  implementations}, in: B.~Malloy, S.~Staab, M.~Brand (Eds.), Software Language
  Engineering, Vol. 6563 of Lecture Notes in Computer Science, Springer Berlin
  Heidelberg, 2011, pp. 42--61.
\newblock \href {http://dx.doi.org/10.1007/978-3-642-19440-5\_4}
  {\path{doi:10.1007/978-3-642-19440-5\_4}}.
\newline\urlprefix\url{http://dx.doi.org/10.1007/978-3-642-19440-5\_4}

\bibitem{tomita:gss}
M.~Tomita, \href{http://www.aclweb.org/anthology/P88-1031}{Graph-structured
  stack and natural language parsing}, in: Proceedings of the 26th Annual
  Meeting of the Association for Computational Linguistics, Association for
  Computational Linguistics, Buffalo, New York, USA, 1988, pp. 249--257.
\newblock \href {http://dx.doi.org/10.3115/982023.982054}
  {\path{doi:10.3115/982023.982054}}.
\newline\urlprefix\url{http://www.aclweb.org/anthology/P88-1031}

\bibitem{scott:gllmodel}
E.~Scott, A.~Johnstone,
  \href{http://dblp.uni-trier.de/db/journals/entcs/entcs253.html#ScottJ10}{Gll
  parsing.}, Electronic Notes on Theoretical Computer Science 253~(7) (2010)
  177--189.
\newline\urlprefix\url{http://dblp.uni-trier.de/db/journals/entcs/entcs253.html#ScottJ10}

\bibitem{scott:brnglr}
E.~Scott, A.~Johnstone, R.~Economopoulos,
  \href{http://dx.doi.org/10.1007/s00236-007-0054-z}{{BRNGLR}: a cubic
  {Tomita}-style {GLR} parsing algorithm}, Acta Informatica 44~(6) (2007)
  427--461.
\newblock \href {http://dx.doi.org/10.1007/s00236-007-0054-z}
  {\path{doi:10.1007/s00236-007-0054-z}}.
\newline\urlprefix\url{http://dx.doi.org/10.1007/s00236-007-0054-z}

\bibitem{lrk:knuth}
D.~E. Knuth, On the translation of languages from left to right, Information
  and Control 8~(6) (1965) 607 -- 639.
\newblock \href
  {http://dx.doi.org/http://dx.doi.org/10.1016/S0019-9958(65)90426-2}
  {\path{doi:http://dx.doi.org/10.1016/S0019-9958(65)90426-2}}.

\bibitem{slrk:deremer}
F.~L. DeRemer, Simple {LR}(k) grammars, Commun. ACM 14~(7) (1971) 453--460.
\newblock \href {http://dx.doi.org/10.1145/362619.362625}
  {\path{doi:10.1145/362619.362625}}.

\bibitem{left:sblp}
S.~Medeiros, F.~Mascarenhas, R.~Ierusalimschy, Left recursion in parsing
  expression grammars, in: Proceedings of the 16th Brazilian conference on
  Programming Languages, SBLP'12, Springer-Verlag, Berlin, Heidelberg, 2012,
  pp. 27--41.

\end{thebibliography}

\end{document}